\newcommand{\nats}{\mathbb{N}}
\newcommand{\qform}{\Phi}
\newcommand{\hform}{\chi}
\newcommand{\pathform}{\varphi}
\newcommand{\pexpr}{P}
\newcommand{\AP}{\mathsf{AP}}
\newcommand{\prob}{\mathbb{P}}
\newcommand{\Prob}{\mathit{Prob}}
\newcommand{\phl}{\text{PHL}\xspace}
\newcommand{\hyperltl}{\mbox{\sc HyperLTL}\xspace}
\newcommand{\hyperctl}{\mbox{\sc HyperCTL$^*$}\xspace}
\newcommand{\hyperpctl}{\mbox{\sc HyperPCTL}\xspace}
\newcommand{\Act}{\mathit{Act}}
\newcommand{\trans}{\mathbf{P}}
\newcommand{\init}{\iota}
\newcommand{\Paths}{\mathit{Paths}}
\newcommand{\dist}{\mathcal{D}}
\newcommand{\Sched}{\mathit{Sched}}
\newcommand{\DSched}{\mathit{DetSched}}
\newcommand{\scheduler}{\mathfrak{S}}
\newcommand{\Pathsf}{\mathit{Paths}_\mathit{fin}}
\newcommand{\Pathsi}{\mathit{Paths}_\mathit{inf}}
\newcommand{\cyl}{\mathit{Cyl}}
\newcommand{\sassign}{\Sigma}
\newcommand{\passign}{\Pi}
\newcommand{\varsp}{\mathcal{V}_{path}}
\newcommand{\varss}{\mathcal{V}_{sched}}
\begin{document}

\mainmatter  % start of an individual contribution

% first the title is needed
\title{Probabilistic Hyperproperties \\of Markov Decision Processes\thanks{This work was partially supported by the Collaborative Research Center “Foundations of Perspicuous Software Systems” (TRR 248, 389792660),  the European Research Council (ERC) Grant OSARES (No. 683300),  the DARPA Assured Autonomy program, the iCyPhy center, and by Berkeley Deep Drive.}}

% a short form should be given in case it is too long for the running head
\titlerunning{}

% the name(s) of the author(s) follow(s) next
%
% NB: Chinese authors should write their first names(s) in front of
% their surnames. This ensures that the names appear correctly in
% the running heads and the author index.
%
\author{Rayna Dimitrova \inst{1}\and Bernd Finkbeiner \inst{1} \and Hazem Torfah \inst{2}}
\authorrunning{}
% (feature abused for this document to repeat the title also on left hand pages)

% the affiliations are given next; don't give your e-mail address
% unless you accept that it will be published
\institute{CISPA Helmholtz Center for Information Security, Saarbr\"ucken, Germany\and University of California at Berkeley, Berkeley, USA}

%
% NB: a more complex sample for affiliations and the mapping to the
% corresponding authors can be found in the file "llncs.dem"
% (search for the string "\mainmatter" where a contribution starts).
% "llncs.dem" accompanies the document class "llncs.cls".
%

\maketitle

\begin{abstract}
  Hyperproperties are properties that describe the
  correctness of a system as a relation between multiple executions. Hyperproperties generalize trace properties and include information-flow security requirements, like noninterference, as well as requirements like symmetry, partial observation, robustness, and fault tolerance.
  We initiate the study of the specification and verification of hyperproperties of Markov decision processes (MDPs). We introduce the temporal logic \emph{\phl (Probabilistic Hyper Logic)}, which extends classic probabilistic logics with quantification over schedulers and traces. \phl can express a wide range of hyperproperties for probabilistic systems, including both classical applications, such as probabilistic noninterference, and novel applications in areas such as robotics and planning. While the model checking problem for \phl is in general undecidable, we provide methods both for proving and for refuting formulas from a fragment of the logic. The fragment includes many probabilistic hyperproperties of interest.
\end{abstract}

\section{Introduction}\label{sec:intro}
Ten years ago, Clarkson and Schneider coined the term
\emph{hyperproperties}~\cite{Clarkson+Schneider/10/Hyperproperties} for the class of properties that
describe the correctness of a system as a relation between multiple
executions. Hyperproperties include information-flow security
requirements, like noninterference~\cite{Goguen+Meseguer/1982/SecurityPoliciesAndSecurityModels}, as well as many other 
types of system requirements that cannot be expressed as trace properties,
including symmetry, partial observation, robustness, and fault
tolerance. Over the past decade, a rich set of
tools for the specification and verification of hyperproperties have been developed.
\hyperltl and \hyperctl~\cite{Clarkson+Finkbeiner+Koleini+Micinski+Rabe+Sanchez/2014/TemporalLogicsForHyperproperties} are extensions to LTL and CTL$^*$ that can
express a wide range of hyperproperties.  There are a number of algorithms and
tools for hardware model checking~\cite{frs15,DBLP:conf/cav/CoenenFST19}, satisfiability checking~\cite{fhs17},
and reactive synthesis~\cite{fhlst18} for hyperproperties.

The natural next step is to consider probabilistic
systems. Randomization plays a key role in the design of
security-critical and distributed systems. In fact, randomization is
often added specifically to implement a certain hyperproperty.  For
example, randomized mutual exclusion protocols use a coin flip to
decide which process gets access to the critical resource in order to
avoid breaking the symmetry based on the process id~\cite{DBLP:conf/ifm/Baier10}. 
Databases employ privacy mechanisms based on randomizaton 
in order to guarantee (differential) privacy~\cite{DifferentialPrivacy}.

Previous work on probabilistic hyperproperties~\cite{DBLP:conf/qest/AbrahamB18} has focussed on the specification and verification of probabilistic hyperproperties for
Markov chains. The logic HyperPCTL~\cite{DBLP:conf/qest/AbrahamB18} extends the standard probabilistic logic PCTL with quantification over states. For example, the HyperPCTL formula 
\[
\forall s.\forall s'.\, (\mathit{init}_s \wedge \mathit{init}_{s'}) \rightarrow \prob(\LTLdiamond \mathit{terminate}_s) = \prob(\LTLdiamond \mathit{terminate}_{s'})
\]
specifies that the probability that the system terminates is the same from all initial states. If the initial state encodes some secret, then the property guarantees that this secret is not revealed through the probability of termination.

Because Markov chains lack nondeterministic choice, they are a limited modeling tool. In an open system, the secret would likely be provided by an external environment, whose decisions would need to be represented by nondeterminism. In every step of the computation, such an environment would typically set the values of some low-security and some high-security input variables. In such a case, we would like to specify that the publicly observable behavior of our system does not depend on the infinite sequence of the values of the high-security input variables. Similarly, nondeterminism is needed to model the possible strategic decisions in autonomous systems, such as robots, or the content of the database in a privacy-critical system.

In this paper, we initiate the study of hyperproperties for \emph{Markov decision
processes} (MDPs). To formalize hyperproperties in this setting, we introduce \phl, a general temporal logic for
probabilistic hyperproperties. The nondeterministic choices of an MDP are
resolved by a \emph{scheduler}\footnote{In the literature, schedulers are also known as strategies or policies.}; correspondingly, our logic
quantifies over schedulers. For example, in the \phl
formula
\[
\forall \sigma.\forall\sigma'.\, \prob(\LTLdiamond \mathit{terminate}_\sigma) = \prob(\LTLdiamond \mathit{terminate}_{\sigma'})
\]
the variables $\sigma$ and $\sigma'$ refer to schedulers. The formula specifies that the probability of termination is the same for all of the possible (infinite) combinations of the nondeterministic choices. If we wish to distinguish different types of inputs, for example those that are provided through a high-security variable $h$ vs. those provided through a low-security variable $l$, then the quantification can be restricted to those schedulers that make the same low-security choices:
\[
\forall \sigma.\forall\sigma'.\, (\forall \pi:\sigma. \forall \pi':\sigma'.\ \LTLsquare (l_\pi\leftrightarrow l_{\pi'})) \rightarrow  \prob(\LTLdiamond \mathit{terminate}_\sigma) = \prob(\LTLdiamond \mathit{terminate}_{\sigma'})
\]
The path quantifier $\forall \pi:\sigma$ works analogously to the
quantifiers in \hyperctl, here restricted to the paths of the Markov
chain induced by the scheduler assigned to variable~$\sigma$. The formula thus states
that all schedulers that agree on the low-security inputs induce
the same probability of termination.

As we show in the paper, \phl is a very expressive logic, thanks to the combination of scheduler quantifiers, path quantifiers and a probabilistic operator. \phl has both
classical applications, such as differential privacy, as well as novel
applications in areas such as robotics and planning. 
For example, we can quantify the interference of the plans of different agents in a multi-agent system, such as the robots in a warehouse, or we can specify the existence of an approximately optimal policy that meets given constraints.  A
consequence of the generality of the logic is that it is impossible to
simply reduce the model checking problem to that of a simpler temporal
logic in the style of the reduction of HyperPCTL to
PCTL~\cite{DBLP:conf/qest/AbrahamB18}. In fact, we show that the
emptiness problem for probabilistic B\"uchi automata (PBA) can be
encoded in \phl, which implies that the model checking problem for \phl is, in
general, undecidable.

We present two verification procedures that approximate the model
checking problem from two sides. The first
algorithm \emph{overapproximates} the model checking problem by
quantifying over a combined monolithic scheduler rather than a tuple of
independent schedulers. Combined schedulers have access to more
information than individual ones, meaning that the set of allowed
schedulers is overapproximated. This means that if a universal formula
is true for all combined schedulers it is also true for all tuples 
of independent schedulers.  The second procedure is a
bounded model checking algorithm that \emph{underapproximates} the model
checking problem by bounding the number of states of the
schedulers. This algorithm is obtained as a combination of a bounded
synthesis algorithm for hyperproperties, which generates the
schedulers, and a model checking algorithm for Markov chains, which
computes the probabilities on the Markov chains induced by the
schedulers. Together, the two algorithms thus provide methods both for proving
and for refuting a class of probabilistic hyperproperties for MDPs.

\paragraph{Related work} Probabilistic noninterference originated in information-flow security~\cite{InfFlowSecurity,InfFlowInteractivePrograms} and is a security policy that requires that the probability of every low trace should be the same for every low equivalent initial state. Volpano and Smith~\cite{VolpanoPNI} presented a type system for checking probabilistic noninterference of concurrent programs with probabilistic schedulers. Sabelfeld and Sands~\cite{SabelfeldPNI} defined a secure type system for multi-threaded programs with dynamic thread creation which improves on that of Volpano and Smith. None of these works is concerned with models combining probabilistic choice with nondeterminism, nor with general temporal logics for probabilistic hyperproperties.

The specification and verification of probabilistic hyperproperties have recently attracted significant attention. Abraham and Bonakdarpour~\cite{DBLP:conf/qest/AbrahamB18} are the first to study a temporal logic for probabilistic 
hyperproperties, called HyperPCTL. The logic allows for explicit quantification over the states of a Markov chain, and is capable of expressing information-flow properties like probabilistic noninterference. The authors present a model checking algorithm for verifying HyperPCTL on finite-state Markov chains. HyperPCTL was extended to a logic called HyperPCTL*~\cite{StatisticalMCHyperPCTL} that allows nesting of temporal and probabilistic operators, and a statistical model checking method for HyperPCTL* was proposed. 
Our present work, on the other hand is concerned with the specification and model checking of probabilistic hyperproperties for system models featuring both probabilistic choice and nondeterminism, which are beyond the scope of all previous  temporal logics for probabilistic hyperproperties. 
Probabilistic logics with quantification over schedulers have been studied in~\cite{StochasticGameLogic} and~\cite{PSL-ijcai2019}. However, these logics do not include quantifiers over paths.

Independently and concurrently to our work, probabilistic hyperproperties for MDPs were also studied in~\cite{abbd20} (also presented at ATVA'20). The authors extend \hyperpctl with quantifiers over schedulers, while our new
logic \phl extends \hyperctl with the probabilistic operator and quantifiers over schedulers.
Thus, \hyperpctl quantifies over states (i.e., the computation trees that start from the states), while \phl quantifies over paths.
Both papers show that the model checking problem is undecidable for the respective logics. The difference is in how the approaches deal with the undecidability result. 
For both logics, the problem is decidable when quantifiers are restricted to non-probabilistic memoryless schedulers.~\cite{abbd20} provides an SMT-based verification procedure for \mbox{\hyperpctl} for this class of schedulers. We consider general memoryful schedulers and present two methods for proving and for refuting formulas from a fragment of \phl.

\section{Preliminaries}\label{sec:preliminaries}
\subsection{Markov Decision Processes}\label{sec:MDP}

% MDP
\begin{definition}[Markov Decision Process (MDP)]\label{def:MDP}
A \emph{Markov Decision Process (MDP)} is a tuple $M = (S,\Act,\trans,\init,\AP,L)$ where 
$S$ is a finite set of states, 
$\Act$ is a finite set of actions, 
$\trans : S \times \Act \times S \to [0,1]$ is the transition probability function such that $\sum_{s'\in S}\trans(s,a,s') \in \{0,1\}$ for every $s \in S$ and $a \in \Act$, $\init : S \to [0,1]$ is the initial distribution such that $\sum_{s \in S} \init(s) = 1$, $\AP$ is a finite set of atomic propositions and $L: S \to 2^{\AP}$ is a labelling function. 
\end{definition}

\looseness=-1
% Paths
A finite path in an MDP $M =  (S,\Act,\trans,\init,\AP,L)$ is a sequence $s_0s_1\ldots s_n$ where for every $0 \leq i < n$ there exists $a_i \in \Act$ such that $\trans(s_i,a_i,s_{i+1})>0$. Infinite paths in $M$ are defined analogously. We denote with $\Pathsf(M)$ and $\Pathsi(M)$ the sets of finite and infinite paths in $M$. For an infinite path $\rho = s_0s_1\ldots$ and $i \in \nats$ we denote with $\rho[i,\infty)$ the infinite suffix $s_i s_{i+1}\ldots $. Given $s \in S$, define $\Pathsf(M,s) = \{s_0s_1\ldots s_n \in \Pathsf(M)\mid s_0=s\}$, and similarly $\Pathsi(M,s)$.
% MDP with initial state
We denote with $M_s = (S,\Act,\trans,\init_s,\AP,L)$ the MDP obtained from $M$ by making $s$ the single initial state, i.e., $\init_s(s) = 1$ and $\init_s(t) = 0$ for  $t \neq s$.

For a set $A$ we denote with $\dist(A)$ the set of probability distributions on $A$.
 
% Schedulers
\begin{definition}[Scheduler]\label{def:scheduler}
A \emph{scheduler} for an MDP $M = (S,\Act,\trans,\init,\AP,L)$ is a function 
$\scheduler : (S\cdot \Act)^* S \to \dist(\Act)$ such that for all sequences $s_0 a_0 \ldots  a_{n-1}s_n \in (S\cdot\Act)^* S$ it holds that if $\scheduler (s_0 a_0 \ldots  a_{n-1}s_n)(a) > 0$ then $\sum_{t\in S}\trans(s_n,a,t) > 0$, that is, each action in the support of $\scheduler (s_0 a_0 \ldots  a_{n-1}s_n)$ is enabled in $s_n$. 
We define $\Sched(M)$ to be the set consisting of all schedulers for an MDP $M$.
\end{definition}

% Induced MC
Given an MDP $M = (S,\Act,\trans,\init,\AP,L)$ and a scheduler $\scheduler$ for $M$, we denote with $M_{\scheduler}$ the \emph{Markov chain of $M$ induced by $\scheduler$}, which is defined as the tuple $M_{\scheduler} = ((S\cdot \Act)^* S,\trans_\scheduler,\init,AP,L_\scheduler)$ where for every sequence $h  = s_0 a_0 \ldots  a_{n-1}s_n \in (S\cdot \Act)^* S$ it holds that $\trans_\scheduler(h,h \cdot s_{n+1}) = \sum_{a \in \Act}\scheduler(h)(a)\cdot\trans(s_n,a, s_{n+1})$ and $L_\scheduler(h) = L(s_n)$. Note that $M_{\scheduler}$ is infinite even when $M$ is finite.
The different types  of paths in a Markov chain are defined as for MDPs.

% Finite memory schedulers
Of specific interest are  \emph{finite-memory} schedulers, which are schedulers that can be represented as finite-state machines. Formally, a finite-memory scheduler for $M$ is represented as a tuple $\mathcal T_\scheduler = (Q,\delta,q_0,\mathit{act})$, where $Q$ is a finite set of states, representing the memory of the scheduler, $\delta : Q \times S \times \Act \to Q$ is a memory update function, $q_0$ is the initial state of the memory, and $\mathit{act} : Q \times S \to \dist(\Act)$ is a function that based on the current memory state and the state of the MDP returns a distribution over actions. Such a representation defines a function $\scheduler : (S\cdot \Act)^* S \to \dist(\Act)$ as follows. First, let us define the function $\delta^* : Q \times (S\cdot \Act)^* \to Q$ as follows: $\delta^*(q,\epsilon) = q$ for all $q \in Q$, and $\delta^*(q,s_0 a_0 \ldots  s_na_ns_{n+1}a_{n+1}) = \delta(\delta^*(q,s_0 a_0 \ldots  s_na_n),s_{n+1},a_{n+1})$ for all $q \in Q$ and all $s_0 a_0 \ldots  s_na_ns_{n+1}a_{n+1} \in (S\cdot \Act)^*$. Now, we define the scheduler function represented by $\mathcal T_\scheduler$ by $\scheduler(s_0 a_0 \ldots  s_na_ns_{n+1}) = \mathit{act}(\delta^*(s_0 a_0 \ldots  s_na_n),s_{n+1})$.

Finite-memory schedulers induce finite Markov chains with simpler representation. A finite memory scheduler $\scheduler$ represented by $\mathcal T_\scheduler = (Q,\delta,q_0,\mathit{act})$ induces the Markov chain $M_{\scheduler} = (S\times Q,\trans_\scheduler,\init_\scheduler,AP,L_\scheduler)$ where $\trans_\scheduler((s,q),(s',q'))=\sum_{a \in\Act}\mathit{act}(q,s)(a)\cdot\trans(s,a,s')$ if $q' = \delta(q,s)$, otherwise $\trans_\scheduler((s,q),(s',q'))= 0$, and $\init_\scheduler(s,q) = \init(s)$ if $q = q_0$ and $\init_\scheduler(s,q) = 0$ otherwise.

% Deterministic schedulers
A scheduler $\scheduler$ is \emph{deterministic} if for every $h \in (S\cdot \Act)^* S$ it holds that $\scheduler(h)(a) = 1$ for exactly one $a \in \Act$. By abuse of notation, a deterministic scheduler can be represented as a function $\scheduler : S^+ \to \Act$, that maps a finite sequence of states to the single action in the support of the corresponding distribution. Note that for deterministic schedulers we omit the actions from the history as they are uniquely determined by the sequence of states. We write $\DSched(M)$ for the set of deterministic schedulers for the MDP $M$.

\subsection{Probability Spaces}\label{sec:probabilities}

A \emph{probability space} is a triple $(\Omega,\mathcal F,\Prob)$, where $\Omega$ is a sample space, $\mathcal F \subseteq 2^{\Omega}$ is a $\sigma$-algebra and $\Prob : \mathcal F \to [0,1]$ is a probability measure.

Given a Markov chain $C = (S,\trans,\init,AP,L)$, it is well known how to associate a probability space $(\Omega^C,\mathcal F^C,\Prob^C)$ with $C$. The sample space $\Omega^C = \Pathsi(C)$ is the set of infinite paths in $C$, where the sets of finite and infinite paths for a Markov chain are defined in the same way as for MDP. The $\sigma$-algebra $\mathcal F^C$ is  the smallest $\sigma$-algebra that for each $\pi \in \Pathsf(C)$ contains the set $\cyl_C(\pi) = \{\rho \in \Pathsi(C) \mid \exists \rho' \in \Pathsi(C) : \rho = \pi \cdot \rho'\}$ called the cylinder set of the finite path $\pi$. $\Prob^C$ is the unique probability measure such that for each $\pi = s_0\ldots s_n \in \Pathsf(C)$ it holds that $\Prob^C(\cyl(\pi)) = \init(s_0) \cdot \prod_{i=0}^{n-1} \trans(s_i,s_{i+1})$.

Analogously, given any state $s \in S$ we denote with $(\Omega^C,\mathcal F^C,\Prob^C_s)$ the probability space for paths in $C$ originating in the state $s$, i.e., the probability space associated with the Markov chain $C_s$ (where $C_s$ is defined as for MDPs).

When considering a Markov chain $M_\scheduler$ induced by an MDP $M$ and a scheduler $\scheduler$, we write $\Prob_{M,\scheduler}$ and $\Prob_{M,\scheduler,s}$ for the sake of readability. 

\subsection{Linear Temporal Properties and Omega-Automata}\label{sec:automata}

\emph{LTL (Linear-time Temporal Logic)}~\cite{LTL} is a logic commonly used for specifying linear-time properties of reactive systems. In addition to Boolean connectives it contains the usual temporal operators Next $\LTLcircle$ and Until $\LTLuntil $, and the derived operators Weak Until $\LTLweakuntil$, Eventually $\LTLeventually$ and Globally~$\LTLglobally$. LTL formulas are defined over a set of atomic propositions $\AP$ and interpreted over infinite sequences over $2^{\AP}$ (see~\cite{LTL} for the semantics of LTL). We denote the satisfaction of an LTL formula $\varphi$ by an infinite sequence $w \in (2^{AP})^\omega $ by $w \models \varphi$. 

Following the convention of ~\cite{PrinciplesOfModelChecking} we use LTL notation for describing events: for example if $G$ is a set of states in a Markov chain $C$ we write $\LTLglobally\LTLfinally G$ for the event that $G$ is visited infinitely often, and write $\pi \models \LTLglobally\LTLfinally G$ instead of $\pi \in \LTLglobally\LTLfinally G$.

A \emph{deterministic Rabin automaton} over a given finite alphabet $\Lambda$ is a tuple
$\mathcal A = (Q,\Lambda,\delta,q_0,(B_{j},G_{j})_{j=1}^{m})$, where
$Q$ is finite set of states, 
$\delta : Q \times \Lambda \to Q$ is the transition function, 
$q_0 \in Q$ is the initial state and the pairs $(B_{j},G_{j})$ of sets of states define the accepting condition of $\mathcal A$. A run of $\mathcal{A}$ on an infinite word $w = \alpha_0\alpha_1\dots \in \Lambda^\omega$ is an infinite sequence $q_0 q_1 q_2\ldots\in Q^\omega$ of states such that $q_0$ is the initial state, and for every $i \in \mathbb{N}$ it holds that $q_{i+1} = \delta(q_i,\alpha_{i})$. Without loss of generality we assume that the transition function is total, and hence there exists exactly one run of $\mathcal A$ on each $w$ since $\mathcal A$ is deterministic. A run $q_0q_1q_2\ldots$ is accepting if it satisfies the Rabin condition which requires that for some pair $(B_j,G_j)$ the set $B_j$ is visited finitely often and the set $G_j$ is visited infinitely often. An infinite word $w$ is accepted by $\mathcal A$ if there exists an accepting run of $\mathcal A$ on $w$. We denote the set of infinite words accepted by 
$\mathcal{A}$ is, called the language of $\mathcal A$, by $\mathcal L(\mathcal{A})$. 
We define the size $|\mathcal A|$ to be the number of its states, i.e., $|\mathcal A| = |Q|$.
It is well know that for every LTL formula $\varphi$ one can construct a deterministic Rabin Automaton $\mathcal A_\varphi$ with $|\mathcal A_\varphi| = 2^{2^{\mathcal O(|\varphi|)}}$ such that $\mathcal L(\mathcal A) = \{w  \in 2^{\AP} \mid w \models \varphi\}$.

If $\pi=s_0s_1\ldots$ is an infinite path in an MDP or a Markov chain $M$, and $\varphi$ an LTL formula, we write $\pi \models_M \varphi$ meaning that $L(s_0)L(s_1)\models \varphi$.

A \emph{safety property} is a linear time property such that every word that does not satisfy the property has a finite prefix all of whose extensions violate the property. Safety properties can be represented by safety automata.
A \emph{deterministic safety automaton} over an alphabet $\Lambda$ is a tuple
$\mathcal A = (Q,\Lambda,\delta,q_0)$ where the transition function $\delta$ is partial and every infinite run is accepting.

\section{The Logic \phl}\label{sec:logic}
In this section we define the syntax and semantics of \phl, the logic which we introduce and study in this work. \phl allows for quantification over schedulers and integrates features of temporal logics for hyper properties, such as \hyperltl and \hyperctl~\cite{Clarkson+Finkbeiner+Koleini+Micinski+Rabe+Sanchez/2014/TemporalLogicsForHyperproperties}, and probabilistic temporal logics such as PCTL*.

\subsection{Examples of \phl Specifications}\label{sec:examples-logic}

We illustrate the expressiveness of PHL with several applications beyond those in information-flow security, from the domains of robotics and planning.

\begin{example}[Action cause]\label{ex:cause}
Consider the question whether a car on a highway that enters the opposite lane (action $b$) when there is a car approaching from the opposite direction (condition $p$) increases the probability of an accident (effect $e$). This can be formalized as the property stating that there exist two deterministic schedulers $\sigma_1$  and $\sigma_2$ such that (i) in $\sigma_1$ the action $b$ is never taken when $p$ is satisfied, (ii) the only differences between $\sigma_1$ and $\sigma_2$ can  happen when $\sigma_2$ takes action $b$ when $p$ is satisfied, and (iii) the probability of $e$ being true eventually is higher in the Markov chain induced by $\sigma_2$ than in the one for  $\sigma_1$. To express this property in our logic, we will use \emph{scheduler quantifiers} quantifying over the schedulers for the MDP. To capture the condition on the way the schedulers differ, we will use \emph{path quantifiers} quantifying over the paths in the Markov chain induced by each scheduler. The atomic propositions in a \phl formula are indexed with path variables when they are interpreted on a given path, and with scheduler variables when they are interpreted in the Markov chain induced by that scheduler. Formally, we can express the property with the \phl  formula 
\[
\exists\sigma_1\exists\sigma_2.\;
(\forall\pi_1:{\sigma_1}\forall\pi_2:{\sigma_2}.\; 
(\LTLglobally 
\neg(p_{\pi_{1}} \wedge \LTLnext b_{\pi_1})) \wedge\psi) \wedge
\prob(\LTLeventually e_{\sigma_1}) < \prob( \LTLeventually e_{\sigma_2}),
\]
where 
$\psi = \big((\bigwedge _{a \in \Act} (\LTLnext a_{\pi_1}  \leftrightarrow \LTLnext a_{\pi_2}))
\vee (p_{\pi_2} \wedge \LTLnext b_{\pi_2})\big) \LTLweakuntil (\bigvee_{q \in \AP\setminus\Act} (q_{\pi_1} \not\leftrightarrow q_{\pi_2}))$.

The two conjuncts of $\forall\pi_1:{\sigma_1}\forall\pi_2:{\sigma_2}.\; 
(\LTLglobally  \neg(p_{\pi_{1}} \wedge \LTLnext b_{\pi_1})) \wedge\psi$ capture conditions (i) and (ii) above respectively, and $\prob(\LTLeventually e_{\sigma_1}) < \prob( \LTLeventually e_{\sigma_2})$ formalizes (iii). Here we assume that actions are represented in $\AP$, i.e., $\Act \subseteq \AP$ \qed
\end{example}

\begin{example}[Plan non-interference]\label{ex:robot}
Consider two robots in a warehouse, possibly attempting to reach the same location. Our goal is to determine whether all plans for the first robot to move towards the goal are robust against interferences from arbitrary plans of the other robot. That is, we want to  check whether for every plan of robot 1 the probability that it reaches the goal under an arbitrary plan of robot 2 is close to that of the same plan for robot 1 executed under any other plan for robot 2.
 We can express this property in \phl by using \emph{quantifiers over schedulers} to quantify over the joint deterministic plans of the robots, and using \emph{path quantifiers}  to express the condition that in both joint plans robot 1 behaves the same. 
Formally, we can express the property with the \phl  formula 
\[\begin{array}{l}
\forall\sigma_1\forall\sigma_2.\;
(\forall\pi_1:{\sigma_1}\forall\pi_2:{\sigma_2}.\; 
\LTLglobally (
\mathit{move1}_{\pi_{1}} \leftrightarrow 
\mathit{move1}_{\pi_{2}})) \rightarrow
\\\phantom{\forall\sigma_1\forall\sigma_2. }\;
\prob(
\LTLeventually 
(\mathit{goal1}_{\sigma_1}\wedge 
\neg \mathit{goal2}_{\sigma_1})) - 
\prob(\LTLeventually 
(\mathit{goal1}_{\sigma_2}\wedge 
\neg \mathit{goal2}_{\sigma_2}))
 \leq \varepsilon,
\end{array}
\]
where $\sigma_1$ and $\sigma_2$  are scheduler variables, $\pi_1$ is a path variable associated with the scheduler for $\sigma_1$, and $\pi_2$ is a path variable associated with the scheduler for $\sigma_2$. The condition $\forall\pi_1:{\sigma_1}\forall\pi_2:{\sigma_2}.\; 
\LTLglobally (
\mathit{move1}_{\pi_{1}} \leftrightarrow 
\mathit{move1}_{\pi_{2}})$ states that in both joint plans robot 1 executes the same moves, where the proposition $\mathit{move1}$ corresponds to robot 1 making a move towards the goal. The formula $\prob(
\LTLeventually 
(\mathit{goal1}_{\sigma_1}\wedge 
\neg \mathit{goal2}_{\sigma_1})) - 
\prob(\LTLeventually 
(\mathit{goal1}_{\sigma_2}\wedge 
\neg \mathit{goal2}_{\sigma_2}))
 \leq \varepsilon$ states that the difference in the probability of robot 1 reaching the goal under scheduler $\sigma_1$ and the probability of it reaching the goal under scheduler $\sigma_2$ does not exceed $\varepsilon$.
 \qed
 \end{example}

\begin{example}[Bounding performance loss]\label{ex:bounded-performance}
\phl allows us to compare the performance of different schedulers in an MDP. For instance we can specify the property that the difference in the probability of success of two schedulers that differ in certain action choices is bounded by a given constant.
For instance, consider an MDP where actions come in pairs, for example one corresponding to using a more precise (but more power consuming) sensor and the other one corresponding to using a less precise (but less power consuming) sensing mechanism. The property that we want to state is that for every policy replacing some of the more precise actions with their less precise counterpart has a bounded effect on the probability of reaching some set of goal states. 

This property is specified by the following formula where the propositions $a'$ and $a''$ correspond to the more precise and less precise actions in a pair:
\[\forall\sigma_1\forall\sigma_2.
\hform_{actions} \rightarrow 
\prob(\LTLeventually 
\mathit{goal_{\sigma_1}}) - 
\prob(\LTLeventually 
\mathit{goal_{\sigma_2}}) \leq \varepsilon, \text{ where}\] 
 $\hform_{actions} = 
\forall\pi_1:{\sigma_1}.\forall\pi_2 :{\sigma_2}.\big(\bigwedge_{a} 
(a'_{\pi_{1}}  \wedge a''_{\pi_{2}} \vee
 a''_{\pi_{1}}  \wedge a''_{\pi_{2}})
 \big)\LTLweakuntil 
\neg(\mathit{in}_{\pi_{1}}\leftrightarrow 
\mathit{in}_{\pi_{2}})$.
The formula $\hform_{actions}$ states that  both policies can differ only in replacing $a'$ in policy $\sigma_1$ by $a''$ in policy $\sigma_2$. The formula $\prob(\LTLeventually 
\mathit{goal_{\sigma_1}}) - 
\prob(\LTLeventually 
\mathit{goal_{\sigma_2}}) \leq \varepsilon$ states that the resulting decrease in the probability of reaching the goal is bounded by $\varepsilon$.
\qed
\end{example}

\smallskip
\begin{example}[Existence of near-optimal scheduler under hard constraint]\label{ex:hard-soft}
Consider an MDP $M= (S,\Act,\trans,\init,\AP,L)$ and two LTL specifications $\varphi_{hard}$ and $\varphi_{soft}$ over $\AP$. We can express in \phl the property that there exists scheduler $\scheduler$ for $M$ such that $\Prob_{M,\scheduler}(\{\pi \models \varphi_{hard}\})\geq c$ such that for every $\scheduler'\in\Sched(M)$ it holds that $\Prob_{M,\scheduler'}(\{\pi \models \varphi_{\mathit{soft}}\}) - \Prob_{M,\scheduler}(\{\pi \models \varphi_{\mathit{soft}}\})\leq \varepsilon$. That is, $\scheduler$ satisfies $\varphi_{hard}$ with probability at least $c$ and satisfies  $\varphi_{\mathit{soft}}$ with probability that is at most $\varepsilon$-away from that of the optimal scheduler. The formula is:
 \[\exists \sigma. \prob(\varphi_{hard}[\AP_{\sigma}/\AP]) \geq c \wedge \forall\sigma'.\prob(\varphi_{\mathit{soft}}[\AP_{\sigma'}/\AP]) - \prob(\varphi_{\mathit{soft}}[\AP_{\sigma}/\AP]) \leq \varepsilon,\]
 where $\varphi_{hard}[\AP_{\sigma}/\AP]$ is the formula $\varphi_{hard}$ in which each $a \in \AP$ is replaced by $a_\sigma$ and similarly for the other two formulas.
 \qed
\end{example}

\begin{example}[Differential privacy]\label{ex:diff-privacy}
Using \phl we can express the property of differential privacy~\cite{DifferentialPrivacy}. Intuitively, a randomized algorithm that produces responses to queries to an input database is differentially private if it behaves similarly on similar input databases.
The schedulers in the MDP encode all possible unbounded length sequences of updates and queries to a database. Update actions are deterministic and modify the database, and query actions are probabilistic and encode the randomized responses to queries by the privacy mechanism.
Suppose that an MDP $M_{dp} =  (S_{dp},\Act_{dp},\trans_{dp},\init_{dp},\AP_{dp},L_{dp})$ encodes some differential-privacy mechanism in the following way.

\begin{itemize}
\item Each state in $S_{dp}$ encodes the current contents of the database, together with the last update or query and the last given response, if any.
\item The set of actions $\Act_{dp}$ consists of two types of actions $\Act_{dp} =  \Act_{update} \uplus   \Act_{query}$, modelling the possible updates and queries respectively.
\item The transition probability function $\trans_{dp}$ is defined based on the privacy mechanism being modelled. The transitions for actions in $\Act_{update}$ are deterministic and encode how the database contents is modified. The transitions for actions $\Act_{query}$ are probabilistic and correspond to the randomized responses to queries as defined by the privacy mechanism. 
\item The atomic propositions in $\AP_{dp} = \AP_{update} \uplus \AP_{query} \uplus \AP_{resp}$ contain propositions for the action labels in order to allow comparison of the sequence of updates and queries, as well as labels describing the responses to the queries. The labelling function $L_{dp}$ maps each state to the atomic propositions reflecting the last action and query response stored in the state.
\end{itemize}

The \phl formula $\qform_{dp}$ given below encodes $(\varepsilon,\delta)$-differential privacy of the privacy mechanism encoded by an MDP of the above form. It states that for every two schedulers that correspond to sequences of updates that differ in at most one position, and that issue the same set of queries, the probabilities of each response to each query are "close" as defined by $\varepsilon$ and $\delta$.
\[\qform_{dp} = 
\forall\sigma_1\forall\sigma_2.
(\forall\pi_1:{\sigma_1}.\forall\pi_2:{\sigma_2}.\;
\varphi_{sim} \wedge \varphi_{queries} )
\rightarrow \varphi_{prob}\]
\begin{itemize}

\item $\varphi_{sim}$ specifies  that the two sequences differ in at most one update:
\[\varphi_{sim} = \LTLglobally 
\Big(
\Big(
\bigvee_{a \in \AP_{update}}
\neg(a_{\pi_{1}} \leftrightarrow a_{\pi_{2}} )\Big)
\rightarrow
\LTLnext\LTLglobally 
\neg \Big(\bigvee_{a \in \AP_{update}}
\neg(a_{\pi_{1}} \leftrightarrow a_{\pi_{2}} )\Big)\Big).
\]

\item $\varphi_{queries}$ specifies that the same set of queries are being issued:
\[\varphi_{queries} = \bigwedge_{a\in \AP_{query}} (\LTLeventually {a}_{\pi_{1}}\leftrightarrow \LTLeventually{a}_{\pi_{2}}).\]

\item $\varphi_{prob}$ specifies that for every query each response has  close enough probability of occurrence in the two schedulers (i.e., the two databases):
\[\varphi_{prob} = 
\bigwedge_{q \in \AP_{query}}
\bigwedge_{r \in \AP_{resp}} 
\prob(\LTLglobally({q}_{\sigma_1} \rightarrow 
{r}_{\sigma_1})) - 
\exp(\varepsilon)
\prob(\LTLglobally({q}_{\sigma_2} \rightarrow 
{r}_{\sigma_2})) < \delta.\]
\end{itemize}
\qed
\end{example}

\subsection{Syntax}\label{sec:syntax}
As we are concerned with hyperproperties interpreted over MDPs, our logic allows for quantification over schedulers and quantification over paths. 

% shceduler and path variables
To this end, let $\varss$ be a countably infinite set of \emph{scheduler variables} and 
let $\varsp$ be a countably infinite set of \emph{path variables}. According to the semantics of our logic, quantification over path variables ranges over the paths in a Markov chain associated with the scheduler represented by a given scheduler variable. To express this dependency we will associate path variables with the corresponding scheduler variable, writing $\pi:\sigma$ for a path variable $\pi$ associated with a scheduler variable $\sigma$. 
The precise use and meaning  of this notation will become clear below, once we define the syntax and semantics of the logic.  

% indexed atomic propositions
Given a set $\AP$ of atomic propositions, \phl formulas over $\AP$ will use atomic propositions indexed with scheduler variables, and both path and associated scheduler variables.
We define the sets of atomic propositions indexed with scheduler variables as
$\AP_{\varss} = \{ a_\sigma \mid a \in \AP,\sigma \in \varss\}$ and 
indexed with path variables as
$\AP_{\varsp} = \{ a_\pi \mid a \in \AP,\pi \in \varsp\}$.

% phl grammar   
\emph{\phl (Probabilistic Hyper Logic) formulas} are defined by the grammar
\[
\qform ::= \;\;
\forall \sigma.\; \qform \;\;\mid\;\; 
\qform \wedge \qform \;\;\mid\;\; 
\neg \qform \;\;\mid\;\; 
\hform  \;\;\mid\;\; 
\pexpr \bowtie c 
\]
where $\sigma \in \varss$ is a scheduler variable, $\hform$ is a \hyperctl formula, $\pexpr$ is a \emph{probabilistic expression} defined below, $\bowtie \in \{\le,\leq,\ge,\geq\}$, and $c \in \mathbb{Q}$.

Formulas in \hyperctl, introduced in~\cite{Clarkson+Finkbeiner+Koleini+Micinski+Rabe+Sanchez/2014/TemporalLogicsForHyperproperties}, are constructed by the grammar
\[
\hform ::= 
a_\pi \;\mid\; 
\hform \wedge \hform \;\mid\; 
\neg \hform \;\mid\; 
\LTLnext \hform  \;\mid\; 
\hform \LTLuntil \hform \;\mid\; 
\forall \pi:\sigma.\; \hform 
\]
where $\pi$ is a path variable associated with a scheduler variable $\sigma$, and $a\in\AP$.

\emph{Probability expressions} are defined by the grammar
\[
\pexpr ::= 
\prob(\pathform) \;\mid\; 
\pexpr + \pexpr \;\mid\; 
c \cdot \pexpr
\]
where $\prob$ is the \emph{probabilistic operator}, $c \in \mathbb{Q}$, and $\varphi$ is an LTL formula~\cite{LTL} defined by the grammar below, where $a \in \AP$ and $\sigma$ is a scheduler variable.
\[
\pathform ::= 
a_{\sigma} \;\mid\; 
\pathform \wedge \pathform \;\mid\; 
\neg \pathform \;\mid\; 
\LTLnext \pathform  \;\mid\; 
\pathform \LTLuntil \pathform.
%\;\mid\;\pexpr \bowtie c
\]
We call formulas of the form $\pexpr \bowtie c$ \emph{probabilistic predicates}.

A \phl formula $\qform$ is \emph{well-formed} if each path quantifier for $\pi:\sigma$ that appears in $\qform$ is in the scope of a scheduler quantifier with the scheduler variable $\sigma$.

A \phl formula is \emph{closed} if all occurrences of scheduler and path variables are bound by scheduler and path quantifiers respectively.

In the following we consider only closed and well-formed \phl formulas.

\paragraph{Discussion} Intuitively, a \phl formula is a Boolean combination of formulas consisting of a scheduler quantifier prefix followed by a formula without scheduler quantifiers constructed from probabilistic predicates and \hyperctl formulas via propositional operators. Thus, interleaving path quantifiers and probabilistic predicates is not allowed in \phl. This design decision is in line with the fact that probabilistic temporal logics like PCTL$^*$ replace the path quantifiers with the probabilistic operator that can be seen as their quantitative counterpart.
We further chose to not allow nesting of probabilistic predicates and temporal operators, as  in all the examples that we considered we never encountered the need for nested $\prob$ operators. Moreover, allowing arbitrary nesting of probabilistic and temporal operators would immediately make the model checking problem for the resulting logic undecidable, following from the results in~\cite{StochasticGamesBranchingTimeObjectives}. 

\subsection{Self-Composition for MDPs}\label{sec:self-composition}

In order to define the semantics of \phl we first introduce the self-composition operation for MDPs, which lifts to MDPs the well-known self-composition of transition systems that is often used in the model checking of hyperproperties.

Let us fix, for the reminder of the section, an MDP $M =  (S,\Act,\trans,\init,\AP,L)$.

% self-composition
\begin{definition}[$n$-self-composition of an MDP]\label{def:self-composition}
Let $M =  (S,\Act,\trans,\init,\AP,L)$ be an MDP and $n \in \nats_{>0}$ be a constant. 
The \emph{$n$-self-composition of $M$} is the MDP 
$M^n =  (S^n,\Act^n,\widehat\trans,\widehat\init,\AP,\widehat L)$ 
with the following components.
\begin{itemize}
\item $S^n = \{(s_1,\ldots,s_n) \mid s_i \in S \text{ for all } 1\leq i \leq n\}$ is the set of states.
\item $\Act^n = \{(a_1,\ldots,a_n) \mid a_i \in \Act \text{ for all } 1\leq i \leq n\}$ is the set of actions.
\item The transition probability function $\widehat\trans$ is defined such that 
for every pair of states $(s_1,\ldots,s_n),(s_1',\ldots,s_n')\in S^n$ and every action $(a_1,\ldots,a_n) \in \Act^n$:
\[
\widehat\trans((s_1,\ldots,s_n),(a_1,\ldots,a_n),(s_1',\ldots,s_n')) = 
\prod_{i=1}^n\trans(s_i,a_i,s'_i).
\]
\item We define the initial distribution such that 
$\widehat\init((s_1,\ldots, s_n)) = \init(s)$  if $s_1 = \ldots = s_n = s$  for some $s\in S$ and $\widehat\init((s_1,\ldots, s_n)) = 0$ otherwise.
\item The labelling function $\widehat L : S^n \to (2^\AP)^n$ maps the states of $M^n$ to $n$-tuples of subsets of $\AP$ (in contrast to definition Definition~\ref{def:MDP} where states are mapped to subsets of $\AP$) and is given by $\widehat L((s_1,\ldots,s_n)) = (L(s_1),\ldots,L(s_n))$. 
\end{itemize}
\end{definition}

% Markov chain for self-composed MDP
Naturally, a scheduler $\widehat\scheduler \in \Sched(M^n)$ induces a Markov chain $M^n_{\widehat{\scheduler}}$.

% scheduler composition
Given schedulers $\scheduler_1,\ldots,\scheduler_n \in \Sched(M)$, their \emph{composition}, a scheduler $\overline \scheduler : (S^n\cdot \Act^n)^* S^n \to \dist(\Act^n)$ for $M^n$, is denoted $\overline \scheduler= \scheduler_1\parallel\ldots\parallel\scheduler_n$ and 
such that for every $\overline h = (s_{1,1},\ldots,s_{1,n})(a_{1,1},\ldots,a_{1,n})\ldots(s_{k,1},\ldots,s_{k,n}) \in (S^n\cdot \Act^n)^* S^n$
and $\overline a = (a_{k+1,1},\ldots,a_{k+1,n}) \in \Act^n$,
$
\overline\scheduler(\overline h)(\overline a) =
\prod_{i=1}^n \scheduler_i(s_{1,i}a_{1,i}\ldots s_{k,i})(a_{k+1,i}).
$

\subsection{Scheduler and Path Assignments}\label{sec:assignments}

Let $\varss$ and $\varsp$ be the sets of scheduler and path variables respectively.

% scheduler assignment
A \emph{scheduler assignment} is a vector of pairs $\sassign \in \bigcup_{n \in \nats}(\varss \times \Sched(M))^n $ that assigns schedulers to some of the scheduler variables. 
Given a scheduler assignment $\sassign = ((\sigma_1,\scheduler_1),\ldots,(\sigma_n,\scheduler_n))$, we denote by $|\sassign|$ the length (number of pairs) of the vector. For a scheduler variable $\sigma \in \varss$ we define $\sassign(\sigma) = \scheduler_i$ where $i$ is the maximal index such that $\sigma_i = \sigma$. If such an index $i$ does not exits, $\sassign(\sigma)$ is undefined.
For a scheduler assignment $\sassign = ((\sigma_1,\scheduler_1),\ldots,(\sigma_n,\scheduler_n))$, a scheduler variable $\sigma \in \varss$, and a scheduler $\scheduler \in \Sched(M)$ we define the scheduler assignment $\sassign[\sigma \mapsto \scheduler] = ((\sigma_1,\scheduler_1),\ldots,(\sigma_n,\scheduler_n),(\sigma,\scheduler))$ obtained by adding the pair $(\sigma,\scheduler)$ to the end of the vector $\sassign$.

% Markov chain induced by scheduler assignment
Given the MDP $M$, let $\sassign = ((\sigma_1,\scheduler_1),\ldots,(\sigma_n,\scheduler_n))$ be a scheduler assignment, and consider $M^{|\sassign|}$, the $|\sassign|$-self composition of $M$. $\sassign$ defines a scheduler for $M^{|\sassign|}$, which is the product of the schedulers in $\sassign$, i.e., $\overline{\scheduler} = \scheduler_1 \parallel \ldots \parallel \scheduler_n$. Let $M_{\sassign}$ be the Markov chain induced by $\overline{\scheduler}$. If $\widehat{s}$ is a state in $M_{\sassign}$, we denote by $M_{\sassign,\widehat s}$ the Markov chain obtained from $M_{\sassign}$ by making $\widehat s$ the single initial state.

% labeling function in product MDP/MC
Note that the labeling function $\widehat L$ in $M^{|\sassign|}$ maps the states in $S^{|\sassign|}$ to $|\sassign|$-tuples of sets of atomic predicates, that is $\widehat{L}(\widehat{s}) = (L_{1},\ldots,L_{|\sassign|})$. Given a scheduler variable $\sigma$ for which $\sassign(\sigma)$ is defined, we write $\widehat{L}(\widehat{s})(\sigma)$ for the set of atomic predicates $L_i$, where $i$ is the maximal position in $\sassign$ in which $\sigma$ appears. 

% path assignments
We define path assignments similarly to scheduler assignments. 
A \emph{path assignment} is a vector of pairs of path variables and paths in $\Pathsi(M)$. 
More precisely, a path assignment $\passign$ is an element of 
$\bigcup_{m \in \nats}(\varsp \times \Pathsi(M))^m$. Analogously to scheduler assignments, for a path variable $\pi$ and a path $\rho \in \Pathsi(M)$, we define $\passign(\pi)$ and $\passign[\pi \mapsto \rho]$. For  $\passign = ((\pi_1,\rho_1),\ldots,(\pi_n,\rho_n))$ and $j \in \nats$, we define $\passign[j,\infty] = ((\pi_1,\rho_1[j,\infty]),\ldots,(\pi_n,\rho_n[j,\infty]))$ to be the path assignment that assigns to each $\pi_i$ the suffix $\rho_i[j,\infty]$ of the path $\rho_i$.

\subsection{Semantics of \phl}\label{sec:semantics}
We are now ready to define the semantics of \phl formulas. Recall that we consider only closed and well-formed \phl formulas. \phl formulas are interpreted over an MDP and a scheduler assignment. The interpretation of \hyperctl formulas requires additionally a path assignment. Probabilistic expressions and LTL formulas are evaluated in the Markov chain for an MDP induced by a scheduler assignment. As usual, the satisfaction relations are denoted by $\models$.

For an MDP $M$ and a scheduler assignment $\sassign$ we define
\[
\begin{array}{lcl}
M,\sassign \models \forall \sigma. \qform & \quad\text{iff} & \quad
\text{for all } \scheduler \in \Sched(M):\; M,\sassign[\sigma\mapsto\scheduler] \models \qform; \\
M,\sassign \models \qform_1 \wedge \qform_2 & \quad\text{iff} & \quad
M,\sassign \models \qform_1 \text{ and } M,\sassign \models \qform_2; \\
M,\sassign \models \neg\qform & \quad\text{iff} & \quad  
M,\sassign \not\models \qform;\\
M,\sassign \models \hform & \quad\text{iff} & \quad  
M,\sassign,\passign_\emptyset \models \hform, \text{where } \passign_\emptyset \text{ is the empty path assignment};\\
M,\sassign \models \pexpr \bowtie c & \quad\text{iff} & \quad  
\llbracket \pexpr \rrbracket_{{M}_{\sassign}} \bowtie c. \\
\end{array}
\]

For an MDP $M$, scheduler assignment $\sassign$, and path assignment $\passign$ we define
\[
\begin{array}{lcl}
M,\sassign,\passign \models a_\pi & \quad\text{iff} & \quad  
a \in L(\Pi(\pi)[0]);\\
M,\sassign,\passign \models \hform_1 \wedge \hform_2 & \quad\text{iff} & \quad  
M,\sassign,\passign \models \hform_1 \text{ and } M,\sassign,\passign \models \hform_2;\\
M,\sassign,\passign \models \neg\hform & \quad\text{iff} & \quad  
M,\sassign,\passign \not\models \hform;\\
M,\sassign,\passign \models \LTLnext \hform & \quad\text{iff} & \quad  
M,\sassign,\passign[1,\infty] \models \hform;\\
M,\sassign,\passign \models \hform_1\LTLuntil \hform_2 & \quad\text{iff} & \quad  
\text{there exists } i \geq 0: M,\sassign,\passign[i,\infty] \models \hform_2 \text{ and}\\
&&\phantom{\text{there exists } i \geq 0:} \quad\text{ for all } j < i: M,\sassign,\passign[j,\infty] \models \hform_1;\\
M,\sassign,\passign \models \forall \pi:\sigma.\; \hform & \quad\text{iff} & \quad  
\text{ for all } \rho \in \Pathsi(C): M,\sassign,\passign[\pi \mapsto \rho] \models \hform,\\
\end{array}
\]
where in the last item $C$ is the Markov chain $M_{\sassign(\sigma)}$ when $\passign$ is the empty path assignment, and otherwise the Markov chain $M_{\sassign(\sigma),\passign(\pi')[0]}$ where $\pi'$ is the path variable associated with scheduler variable $\sigma$ that was most recently added to $\passign$.\looseness=-1

For Markov chain $C$ of the form $M_{\sassign}$ or $M_{\sassign,\widehat s}$, where $\sassign$ is a scheduler assignment and $\widehat s$ is a state in $M_\sassign$  the semantics $\llbracket \cdot \rrbracket_C$ of probabilistic  expressions is:\looseness=-1
\[
\begin{array}{lcl}
\llbracket \prob(\pathform) \rrbracket_{C} & \; = \; & 
\Prob^C(\{\rho \in \Pathsi(C) \mid C,\rho \models \pathform\}); \\
\llbracket \pexpr_1 + \pexpr_2 \rrbracket_{C} & \; = \; & 
\llbracket \pexpr_1 \rrbracket_{C} + \llbracket \pexpr_2 \rrbracket_{C}; \quad
\llbracket c \cdot \pexpr \rrbracket_{C}  \; = \; 
c \cdot \llbracket \pexpr \rrbracket_{C}, \\
\end{array}
\]
where the semantics of path formulas (i.e., LTL formulas) is defined by
\[
\begin{array}{lcl}
C,\rho \models a_{\sigma} & \quad\text{iff} & \quad  
a \in \widehat L(\rho[0])(\sigma);\\
C,\rho \models \pathform_1 \wedge \pathform_2 & \quad\text{iff} & \quad  
C,\rho \models \pathform_1 \text{ and } C,\rho \models \pathform_2; \\
C,\rho \models \neg\pathform & \quad\text{iff} & \quad  
C,\rho \not\models \pathform;\\
C,\rho \models \LTLnext \pathform & \quad\text{iff} & \quad  
C,\rho[1,\infty] \models \pathform;\\
C,\rho \models \pathform_1\LTLuntil \pathform_2 & \quad\text{iff} & \quad  
\text{there exists } i \geq 0: C,\rho[i,\infty] \models \pathform_2 \text{ and}\\
&&\phantom{\text{there exists } i \geq 0:} \quad\text{ for all } j < i: C,\rho[j,\infty] \models \pathform_1.\\
%C,\rho \models \pexpr \bowtie c & \quad\text{iff} & \quad  
%\llbracket \pexpr \rrbracket_{C_{\rho[0]}} \bowtie c.\\
\end{array}
\]

Note that $\Prob^C(\{\rho \in \Pathsi(C) \mid C,\rho \models \pathform\})$ is well-defined as it is a known fact~\cite{PrinciplesOfModelChecking} that the set $\{\rho \in \Pathsi(C) \mid C,\rho \models \pathform\}$ is measurable.

We say that an MDP $M$ \emph{satisfies} a closed well-formed \phl formula $\Phi$, denoted $M \models \Phi$ iff $M,\Sigma_{\emptyset}\models \Phi$, where $\Sigma_{\emptyset}$ is the empty scheduler assignment.

% deterministic vs randomized quantifiers

Since \phl includes both scheduler and path quantification, the sets of deterministic and randomized schedulers are not interchangeable with respect to the \phl semantics. That is, there exists an MDP $M$ and formula $\qform$ such that if quantifiers are interpreted over $\Sched(M)$, then $M \models \qform$, and if quantifiers are interpreted over $\DSched(M)$ then $M \not\models \qform$. See Appendix~\ref{sec:rand-sched} for an example.

\subsection{Undecidability of \phl Model Checking}\label{sec:undecidability}
 Due to the fact that \phl allows quantification over both schedulers and paths, the model checking problem for \phl is undecidable. The proof is based on a reduction from the emptiness problem for probabilistic B\"uchi automata (PBA), which is known to be undecidable~\cite{DecisionProblemsProbabilisticBuchiAutomata}.

\begin{theorem}\label{thm:undecidability-pba}
The model checking problem for \phl is undecidable.
\end{theorem}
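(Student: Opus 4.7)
The plan is to reduce the emptiness problem for probabilistic B\"uchi automata (PBA) to \phl model checking. Recall that the PBA emptiness problem asks, given a PBA $\mathcal A$ over alphabet $\Sigma$ and a rational threshold $\lambda \in (0,1)$, whether there is an infinite word $w \in \Sigma^\omega$ such that $\mathcal A$ accepts $w$ with probability at least $\lambda$. Since this problem is undecidable~\cite{DecisionProblemsProbabilisticBuchiAutomata}, producing a computable translation of an instance $(\mathcal A,\lambda)$ into an MDP $M_{\mathcal A}$ and a \phl formula $\Phi_{\mathcal A}$ with $M_{\mathcal A} \models \Phi_{\mathcal A}$ iff $\mathcal A$ is nonempty will establish the theorem.

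Given $\mathcal A = (Q,\Sigma,\delta,q_0,F)$, the first step is to build the MDP $M_{\mathcal A} = (S,\Act,\trans,\init,\AP,L)$ where $S = Q \times (\Sigma \cup \{\bot\})$ (the second component records the last letter consumed), $\Act = \Sigma$, $\trans((q,b),a,(q',a)) = \delta(q,a,q')$ and zero elsewhere, and $\init$ concentrates all mass on $(q_0,\bot)$. I take $\AP = \Sigma \cup \{\mathit{acc}\}$ and $L(q,b) = \{b \mid b \in \Sigma\} \cup \{\mathit{acc} \mid q \in F\}$, so that the entire sequence of actions taken along any infinite path can be read off the state labels starting from position $1$.

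The second step is to design the specification
\[
\Phi_{\mathcal A} \;=\; \exists \sigma.\; \hform_{\mathit{word}}(\sigma) \;\wedge\; \prob\!\big(\LTLglobally \LTLfinally \mathit{acc}_\sigma\big) \geq \lambda,
\]
where the existential scheduler quantifier is encoded as $\neg\forall\sigma.\neg(\cdot)$ and
\[
\hform_{\mathit{word}}(\sigma) \;=\; \forall \pi_1:\sigma.\,\forall \pi_2:\sigma.\; \LTLnext\LTLglobally \bigwedge_{a \in \Sigma}\big(a_{\pi_1} \leftrightarrow a_{\pi_2}\big).
\]
The intention of $\hform_{\mathit{word}}$ is to force every infinite path of the induced Markov chain $M_{\mathcal A,\sigma}$ to exhibit the same sequence of action labels, thereby pinning $\sigma$ down to a unique word $w_\sigma \in \Sigma^\omega$. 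Once this is in place, $M_{\mathcal A,\sigma}$ simulates exactly the probabilistic run of $\mathcal A$ on $w_\sigma$, and the probabilistic predicate $\prob(\LTLglobally \LTLfinally \mathit{acc}_\sigma) \geq \lambda$ matches the PBA acceptance condition.

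The main obstacle is to justify the claim that $\hform_{\mathit{word}}(\sigma)$ really forces $\sigma$ to behave as a single word, because the path quantifier ranges over \emph{all} infinite paths of $M_{\mathcal A,\sigma}$ (including measure-zero ones) and schedulers are allowed to be randomised. I would argue as follows: if at some history $h$ reachable under $\sigma$ the scheduler assigned positive probability to two distinct actions $a \neq a'$, then one could extend $h$ into two infinite paths of $M_{\mathcal A,\sigma}$ whose labels disagree at the corresponding step, violating $\hform_{\mathit{word}}(\sigma)$; an entirely analogous argument handles two distinct reachable histories of the same length to which $\sigma$ assigns different actions. Hence $\sigma$ selects a single action depending only on the step index, which is exactly a word $w_\sigma$. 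Putting everything together yields $M_{\mathcal A} \models \Phi_{\mathcal A}$ if and only if some $w \in \Sigma^\omega$ is accepted by $\mathcal A$ with probability at least $\lambda$, completing the reduction.
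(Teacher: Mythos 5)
Your proposal is correct and follows essentially the same route as the paper: a reduction from PBA emptiness in which the scheduler picks letters, a universally quantified path formula forces all paths of the induced Markov chain to carry the same letter sequence (hence the scheduler is blind and encodes a single infinite word), and a probabilistic predicate over $\LTLglobally\LTLfinally$ of the accepting states captures PBA acceptance. The only cosmetic differences are that the paper uses an $\exists\pi\forall\pi'$ rather than a $\forall\pi_1\forall\pi_2$ quantifier pattern and phrases emptiness with the threshold $>0$ (the version explicitly shown undecidable in the cited reference) rather than $\geq\lambda$; your reduction works verbatim with the predicate $\prob(\LTLglobally\LTLfinally \mathit{acc}_\sigma)>0$.
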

\begin{proof}
We prove the undecidability of the model checking problem for \phl via a reduction from the emptiness problem for probabilistic B\"uchi automata (PBA), which is known to be undecidable~\cite{DecisionProblemsProbabilisticBuchiAutomata}. More precisely, we show how for each PBA $\mathcal B$ to construct an MDP $M$ and an \phl formula $\Phi$ such that there exists a infinite word $w$ over $\mathcal B$'s alphabet such that $\Prob_{\mathcal B} (w) >0$ iff $M \models \Phi$.

Let $\mathcal B = (Q,\Lambda,\delta,\mu,F)$ be a PBA, where $Q$ is a finite set of states, $\Lambda$ is a finite alphabet, $\delta : Q \times \Lambda \times Q \to [0,1]$ is the transition probability function, $\mu$ is the initial distribution and $F \subseteq Q$ is the set of accepting states.   

We will now define an MDP in which the scheduler picks letters in $\Lambda$ and the probabilistic choices mimic the behaviour of the PBA. The \phl formula will assert the existence of a scheduler such that in the resulting Markov chain \emph{all paths have the same scheduler choices} (i.e., the scheduler is "blind") and in the resulting Markov chain the probability of visiting an accepting state  infinitely often is greater than $0$. This is equivalent to the scheduler generating word $w \in \Lambda^\omega$ regardless of what  states are visited by the automaton, and $\Prob_{\mathcal B} (w) >0$.

We define the MDP $M = (S,\Act,\trans,\init,\AP,L)$ as follows.
\begin{itemize}
\item The states and actions are respectively $S =  Q \times (\Lambda \uplus \{\bot\})$ and $\Act = \Lambda$.
\item The transition probability function $\trans$ is defined by the function $\delta$ in $\mathcal B$ and is such that for every $(q,\alpha),(q',\alpha') \in S$ and $\beta \in \Lambda$ we have 
$\trans((q,\alpha),\beta,(q',\alpha')) = 
\delta(q,\beta,q')$  if $\alpha' = \beta$,
and $\trans((q,\alpha),\beta,(q',\alpha')) = 0$
otherwise. 
\item The initial distribution $\init$ is defined by the initial distribution $\mu$ from $\mathcal B$ and is such that $\init((q,\alpha)) = \mu(q)$ if $\alpha=\bot$ and $\init((q,\alpha)) = 0$ for $\alpha \neq \bot$.
\item The atomic propositions $\AP = \Lambda \uplus \{f\}$ correspond to the alphabet of $\mathcal B$ plus a fresh proposition $f$, and $L$ is such that for every $(q,\alpha)\in S$ we have that:
\begin{itemize}
\item if $\alpha=\bot$: if $q \not \in F$, then $L((q,\alpha)) = \emptyset$, and if $q \in F$, $L((q,\alpha)) = \{f\}$;
\item if $\alpha\in \Lambda$: if $q \not \in F$, then $L((q,\alpha)) = \{\alpha\}$, otherwise $L((q,\alpha)) = \{\alpha,f\}$.
\end{itemize} 
\end{itemize}

Let $
\qform = \exists \sigma.\; \big(
\exists \pi:{\sigma}.
\forall \pi':{\sigma}.\;
\LTLglobally \bigwedge_{\alpha \in \Lambda}
(\alpha_{\pi}\leftrightarrow 
\alpha_{\pi'})\big) \wedge
\prob(\LTLglobally\LTLfinally f_\sigma)>0.
$

\smallskip

We now prove that  for the MDP $M$ and the formula $\Phi$ defined above it holds that there exists a word $w$ such that $\Prob_{\mathcal{B}}(w) > 0$ if and only if $M \models \Phi$.

($\Rightarrow$) Let $w=\alpha_0\alpha_1 \ldots \in \Lambda^\omega$ be such that $\Prob_{\mathcal{B}}(w) > 0$.

We define the deterministic scheduler $\scheduler: (S \times \Act)^*S \to \Act$ such 
that for any $s_1a_1\ldots s_n a_n s'\in (S \times \Act)^*S$ we have 
$\scheduler(s_1a_1\ldots s_n a_n s') = \alpha_n$. That is, the scheduler $\scheduler$ blindly follows the word  $w$. Thus, it is clear that all paths in $M_\scheduler$ have the same sequence of labels (which is the word $w$), and hence, the first conjunct in $\qform$ will be satisfied. Since $\trans$ is defined directly following $\delta$ and $f \in L((q,\alpha))$ if and only if $q \in F$, we have that $\Prob_{M,\scheduler}(\LTLglobally\LTLfinally f) = \Prob_{\mathcal B}(w)$, and hence, since $\Prob_{\mathcal B}(w) > 0$ the second conjunct of $\qform$ will be satisfied.

This concludes the proof that $M \models \Phi$.

($\Leftarrow$) Suppose that $M \models \Phi$. Thus, there exists a scheduler assignment $\sassign = ((\sigma,\scheduler))$ such that 
$M,\sassign \models \big(
\exists \pi:{\sigma}.
\forall \pi':{\sigma}.\; 
\LTLglobally \bigwedge_{\alpha \in \Lambda}
(\alpha_{\pi}\leftrightarrow \alpha_{\pi'})\big) \wedge
\prob(\LTLglobally\LTLfinally f_\sigma)>0$.

Then, since $M_\sassign$ satisfies the first conjunct, there exists a path assignment $\passign$ such that $M,\sassign,\passign \models \forall \pi':{\sigma}.\; 
\LTLglobally \bigwedge_{\alpha \in \Lambda}
(\alpha_{\pi}\leftrightarrow \alpha_{\pi'})$.
Let $w \in \Lambda^\omega$ be the word obtained from the sequence of labels of the path assigned to the path variable $\pi$ in $\passign$. By the choice of $\passign$ we have that all paths in $M_{\scheduler}$ are labeled with $w$. Therefore, $\Prob_{M,\scheduler}(\LTLglobally\LTLfinally f) = \Prob_{\mathcal B}(w)$. Since $M_\sassign$ satisfies the second conjunct, we have that $\Prob_{M,\scheduler}(\LTLglobally\LTLfinally f) >0$, which thus implies that $\Prob_{\mathcal B}(w)>0$.

This concludes the proof that model checking \phl is undecidable.
\qed
\end{proof}

In the proof of Theorem~\ref{thm:undecidability-pba} we reduced the emptiness problem for PBA to the model checking of a \phl formula of the form 
$\exists\sigma\big(
(\exists \pi:{\sigma}.\forall\pi':\sigma.\; \psi) \wedge (\pexpr \bowtie c)\big)$
which contains a single existential scheduler quantifier and a \hyperctl formula  in the $\exists^1\forall^1$ fragment of \hyperltl.
An alternative undecidability proof would be to reduce the general synthesis problem for \hyperltl to the model checking problem for \phl. Intuitively, for any \hyperltl formula $\hform$ one can construct an MDP in which the actions correspond to the possible outputs of a strategy and the probabilistic choices encode the nondeterministic input from the environment. Since schedulers can be randomized, the \phl formula needs to encode the requirement that the implementation must be deterministic, which is easily done in \hyperltl. Thus, the undecidability of synthesis for different fragments of \hyperltl can be used to establish he undecidability of model checking of \phl formulas without probabilistic expressions. However, note that the synthesis problem  $\exists^1\forall^1$ fragment of \hyperltl is decidable, so the probabilistic predicate in the \phl formula in the above proof plays a key role in establishing the undecidability of the model checking for \phl formulas of this form.\looseness=-1

We saw in the previous section that several examples of interest are \phl formulas of the form 
$\forall \sigma_1\ldots\forall\sigma_n.\ 
\big((\forall \pi_1:{\sigma_1}\ldots\forall\pi_n:{\sigma_n}.
\;\psi) \rightarrow \pexpr \bowtie c\big)$. 
Analogously to Theorem~\ref{thm:undecidability-pba}, we can show that the model checking problem for \phl formulas of the form
$\exists \sigma_1\ldots\exists\sigma_n.\ 
(\forall \pi_1:{\sigma_1}\ldots\forall\pi_n:{\sigma_n}.\;\psi \wedge \pexpr \bowtie c)$ is undecidable. (The proof can be found in Appendix~\ref{sec:undecidabilityproofs}.) The undecidability for formulas of the form $\forall \sigma_1\ldots\forall\sigma_n.\ 
\big((\forall \pi_1:{\sigma_1}\ldots\forall\pi_n:{\sigma_n}.
\;\psi) \rightarrow \pexpr \bowtie c\big)$
then follows by duality. In the next two sections we present an approximate model checking procedure and a bounded model checking procedure for \phl formulas in these two classes.

Since there are finitely many deterministic schedulers with a given fixed number of states, the result stated in the next theorem is easily established.

\begin{theorem}
For any constant $b \in \mathbb{N}$, the model checking problem for \phl restricted to deterministic finite-memory schedulers with $b$ states is decidable. 	
\end{theorem}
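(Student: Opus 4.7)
The plan is to reduce the problem to finitely many model-checking queries on finite-state objects. Let $M = (S,\Act,\trans,\init,\AP,L)$ be an MDP. A deterministic finite-memory scheduler with $b$ memory states is specified by a tuple $(Q,\delta,q_0,\mathit{act})$ with $|Q|=b$, $\delta:Q\times S\times\Act\to Q$, and $\mathit{act}:Q\times S\to\Act$; since $S$, $\Act$, and $Q$ are finite, there are only finitely many such tuples. My first step is to replace every scheduler quantifier $\forall\sigma$ (respectively $\exists\sigma$) in the \phl formula by a finite conjunction (resp.\ disjunction) ranging over this enumerated set of schedulers. This reduces the model-checking problem for a closed \phl formula $\qform$ to deciding, for each of finitely many concrete scheduler assignments $\sassign=((\sigma_1,\scheduler_1),\ldots,(\sigma_n,\scheduler_n))$, a ground Boolean combination of \hyperctl subformulas and probabilistic predicates, both evaluated on Markov chains built from $\sassign$.

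For any such $\sassign$, the two remaining kinds of subformulas are decidable because every Markov chain involved is finite. Each $M_{\scheduler_i}$ is a finite Markov chain over $S\times Q_i$, and the Markov chain $M_\sassign$ arising from the product scheduler $\scheduler_1\parallel\cdots\parallel\scheduler_n$ on the $n$-self-composition has state space $S^n\times Q_1\times\cdots\times Q_n$. A probabilistic predicate $\pexpr\bowtie c$ can be evaluated by translating each LTL path formula $\pathform$ appearing in $\prob(\pathform)$ to a deterministic Rabin automaton $\mathcal A_\pathform$ (as recalled in Section~\ref{sec:automata}), forming the product $M_\sassign\times\mathcal A_\pathform$, and computing the probability of satisfying the Rabin acceptance condition by the standard bottom-SCC analysis for finite Markov chains~\cite{PrinciplesOfModelChecking}; the combinations $\pexpr_1+\pexpr_2$ and $c\cdot\pexpr$ are then evaluated in the rationals and compared with $c$. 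A \hyperctl subformula $\hform$ can be model-checked by adapting the classical self-composition-based algorithm of~\cite{Clarkson+Finkbeiner+Koleini+Micinski+Rabe+Sanchez/2014/TemporalLogicsForHyperproperties} to the underlying finite transition systems of the relevant chains $M_{\scheduler_i}$, treating deterministic schedulers as producing an ordinary finite Kripke structure once probabilities are abstracted away.

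The only subtle point is the \hyperctl clause $\forall\pi:\sigma.\,\hform$, whose semantics depends on the first state of the most recently added path for $\sigma$ in the current path assignment. This is handled exactly as in standard \hyperctl model checking: alongside the current state of the product of the relevant $M_{\scheduler_i}$'s, the algorithm tracks, for each scheduler variable $\sigma$, the component whose current state serves as the reference initial state for $\sigma$, and refreshes it whenever a new path quantifier for $\sigma$ is entered. Every structure constructed along the way is finite, every enumeration ranges over a finite set, and classical algorithms decide both leaf problems, so the overall procedure terminates and decides $M\models\qform$.
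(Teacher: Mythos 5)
Your proposal is correct and takes essentially the same approach as the paper: enumerate the finitely many deterministic $b$-state schedulers, instantiate the scheduler quantifiers over this finite set, and then decide the resulting ground \hyperctl subformulas and probabilistic predicates on the finite induced Markov chains using standard algorithms. You simply spell out the details that the paper's one-line proof leaves implicit.
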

\begin{proof}
Since there are finitely many deterministic schedulers with $b$ states, we can verify a given \phl formula by enumeration of schedulers, since for fixed schedulers both the verification of the \hyperctl part and and the verification of the formulas over probabilistic expressions are decidable.\qed
\end{proof}

\section{Approximate Model Checking}
In this section we provide a sound but incomplete procedure for model checking a fragment of \phl.
The fragment we consider consists of those \phl formulas that are positive Boolean combinations of 
formulas of the form
\begin{equation}\label{eq:universal-fragment}
\qform  =  \forall \sigma_1\ldots\forall\sigma_n.\ 
\big(\hform \rightarrow 
c_1 \cdot \prob(\pathform_1) + \ldots + c_k \cdot \prob(\pathform_k) \bowtie c
\big)
\end{equation}
where $\hform = \forall \pi_1:{\sigma_1}\ldots\forall\pi_n:{\sigma_n}.\;\psi$ and
the formula $\psi$ is such that:
\begin{itemize}
\item $\psi$ does not contain path quantifiers,
\item $\psi$ describes an $n$-safety property (intuitively, a safety property on $M^n$~\cite{Clarkson+Schneider/10/Hyperproperties}).
\end{itemize}
The formulas in Example~\ref{ex:robot}, Example~\ref{ex:bounded-performance}, and Example~\ref{ex:diff-privacy} in Section~\ref{sec:examples-logic} fall into this class (to see this, note that for Example~\ref{ex:diff-privacy} the conjunction of the probabilistic expressions can be taken outside of the scope of the scheduler quantifiers).

According to the conditions we impose above, $\psi$ contains at most one path variable associated with each scheduler variable in $\{\sigma_1,\ldots\sigma_n\}$. This will allow us to use the classical self-composition approach to obtain an automaton for $\hform$. Furthermore, requiring that $\psi$ describes an $n$-safety property will enable us to construct a deterministic safety automaton for $\hform$ which, intuitively, represents the most general scheduler in $M^n$, such that every scheduler that refines it results in a Markov chain in which all paths satisfy the formula $\psi$.

Since for every Markov chain $C$ we have $\Prob^C(\{\pi\in\Pathsi(C) \mid \pi \models \pathform\}) = 1 - \Prob^C(\{\pi\in\Pathsi(C) \mid \pi \models \neg\pathform\})$, it suffices to consider the case when $\bowtie$ is $\leq$ (or $ <$) and 
$c_i \geq 0$ for each $i = 1,\ldots,k$.
Otherwise, if $\bowtie$ is $\geq$ (or $>$) we replace each $c_i \cdot \prob(\pathform_i)$ by $c_i \cdot (1-\prob(\neg\pathform_i))$, and then move the sum $-\sum c_i \cdot \prob(\neg\pathform_i)$ to the other side of the inequality. 

When $\bowtie$ is $\leq$ (or $ <$), if for some $i$ we have $c_i < 0$ we can rewrite the probabilistic expression by replacing $c_i \cdot \prob(\pathform_i)$ by $c_i \cdot (1-\prob(\neg\pathform_i))$, which is equal to $-c_i \cdot \prob(\neg\pathform_i) + c_i$. The constant term $c_i$ is then moved to the constant expression on the right of $\bowtie$, and we are left with $d_i \cdot \prob(\pathform_i)$ where $d_i = -c_i > 0$.

Thus, it suffices to consider probabilistic predicates of the form $\sum_{i=1}^k c_i \cdot \prob(\varphi_i)\leq c$, where $c_i >0$ for each $i=1,\ldots,k$. The case when $\bowtie$ is $<$ is analogous.

\begin{figure}[t]
\centering
\scalebox{.8}[0.9]{
\begin{tikzpicture}
	\node[] (formula) at (0,-.5){$\qform  =  \forall \sigma_1\ldots\forall\sigma_n.\ 
\big((\forall \pi_1:{\sigma_1}\ldots\forall\pi_n:{\sigma_n}.\;\psi) \rightarrow 
c_1 \cdot \prob(\pathform_1) + \ldots + c_k \cdot \prob(\pathform_k) \bowtie c
\big)
$};
	\node[](MDP) at (-6.6,-.5){$M$};
	
	\node[draw, rounded corners, align = center] (safetyAuto) at (-1,-2) {construct\\ safety automaton};
	
	\path[draw,->,thick] (formula) |- (-1,-1) -| node[left]{$\psi$}(safetyAuto.north);
	
	\node[draw, rounded corners, align=center] (selfcomp) at (-4,-2){compute\\self-composition};
	\path[draw,->,thick] (MDP) |-  node[left]{}(selfcomp.west);
	\path[draw,->,thick] (-4,-.8) --  node[left]{$n$}(selfcomp.north);
	
	\node[draw, rounded corners, align = center](prodMnD) at (-2.7,-3.5){compute $M^n \otimes \mathcal D_\psi$};
	\path[draw,->,thick] (selfcomp) |- node[left]{$M^n$} (-3,-3) -|(prodMnD.120);
	\path[draw,->,thick] (safetyAuto) |- node[right]{$\mathcal D_\psi$} (-2,-3) -|(prodMnD.60);
	
	\node[draw,rounded corners, align= center] (rabinAuto) at (3,-2){construct Rabin automata};
	\path[draw,->,thick] (2,-.8) --  node[left]{$\varphi_1$}(2,-1.8);
	\path[draw,->,thick] (4.5,-.8) --  node[right]{$\varphi_k$}(4.5,-1.8);
	\node[] (dots) at (3.25,-1){...};
	
	\node[draw, rounded corners, align = center] (finalprod) at (2,-4){$\widehat{M}_\hform \otimes \mathcal A_1 \otimes \dots \otimes \mathcal A_k $};
	\path[draw,->,thick] (prodMnD) |-  node[below right]{$\widehat{M}_\hform$}(finalprod);
	\path[draw,->,thick] (2,-2.22) |- node[above left]{$\mathcal A_1$} (2,-3) -|(finalprod.160);
	\path[draw,->,thick] (4.5,-2.22) |- node[above right]{$\mathcal A_k$} (4,-3) -|(finalprod.15);
	\node[] (dots) at (3.25,-2.5){...};
	\node[] (dots) at (2.25,-3.5){...};
	
	\node[] (final) at (2,-5){$\widetilde{M}$}; 
	\path[draw,->,thick](finalprod) -- (final);
	
\end{tikzpicture}
}

\caption{Approximate model checking of \phl formulas of the form~(\ref{eq:universal-fragment}).}	
\label{fig:overapproximation}
\end{figure}
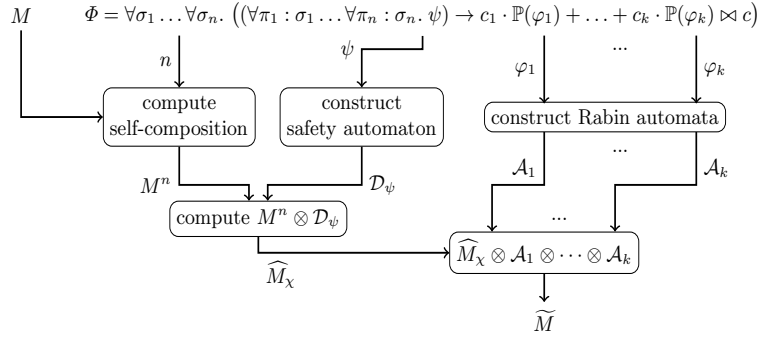

We now describe a sound procedure for checking whether an MDP $M = (S,\Act,\trans,\init,\AP,L)$ satisfies a \phl formula $\qform$ of the form~(\ref{eq:universal-fragment}). If the answer is positive, then we are guaranteed that $M\models \qform$, but otherwise the result is inconclusive. The workflow of the method is depicted in Figure~\ref{fig:overapproximation}. The method proceed in the following steps.

\begin{enumerate}

\item {\bf Deterministic safety automaton for $\psi$.} 
We begin by constructing  a deterministic safety automaton $\mathcal D_\psi = (Q,S^n,\delta,q_0)$ for the $n$-hyper safety property $\psi$. 
For the language of $\mathcal D_\psi$ we have that for each word $w\in (S^n)^\omega$ it holds that $w \in \mathcal L(\mathcal D_\psi )$ if and only if for some scheduler assignment $\sassign$ it holds that $M,\sassign,\passign_w \models \psi$, where $\passign_w$ is the path assignment corresponding to the word $w= (s_{0,1},\ldots,s_{0,n})(s_{1,1},\ldots,s_{1,n})\ldots$, formally defined as 
\[\passign_w = ((\pi_{1},s_{0,1}s_{1,1}\ldots),\ldots,(\pi_{n},s_{0,n}s_{1,n}\ldots)).\]
Note that, since $\psi$ does not contain path quantifiers, and all path variables appearing in $\psi$ are assigned in the path assignment $\passign_w$, the scheduler assignment $\sassign$ does not play any role for the satisfaction of $\psi$.

\item {\bf Product MDP for $M^n$ and $\mathcal D_\psi$.} 
As a second step we construct the $n$-self-composition MDP $M^n =  (S^n,\Act^n,\widehat\trans,\widehat\init,\AP,\widehat L)$, and then build the product of $M^n$ with the deterministic safety automaton $\mathcal D_\psi$, which is the MDP $\widehat M_\hform =  (\widehat S_\hform,\Act^n,\widehat\trans_\hform,\widehat\init_\hform,\AP,\widehat L_\hform)$ with the following components:
 \begin{itemize}
 \item $\widehat S_\hform = S^n \times Q$;
 \item $\widehat\trans_\hform((\widehat s,q),a,(\widehat s',q')) = 
 \begin{cases}
 \widehat\trans(\widehat s,a,\widehat s') & \text{if } q' = \delta(q,\widehat s'),\\
 0 & \text{otherwise};
 \end{cases}$
 \item $\widehat\init_\hform((\widehat s,q))= \begin{cases}
 \widehat\init(\widehat s) & \text{if } q=\delta(q_0,\widehat s),\\
 0 & \text{otherwise};
  \end{cases}$
 \item $\widehat{L}_\hform((\widehat{s},q)) = \bigcup_{i=1}^n\{a_{\sigma_i} \mid a \in\widehat{L}(\widehat{s})(\sigma_i)\}.$ 
 \end{itemize}
 
Note that we can assume that $\widehat M_\hform$ is a valid MDP, since if this is not the case, that is, if there are states without enabled actions, or actions for which the set of successors does not form a probability distribution, such states and actions can be iteratively eliminated until a valid MDP is obtained.
 
 Intuitively, the infinite paths in $\widehat M_\hform$ correspond to the tuples of infinite paths in $M$ such that each such tuple satisfies the  $n$-hyper safety property $\psi$.
 
\item\label{proc:overapprox} {\bf Overapproximation of $
\max_{\overline{\scheduler}= \scheduler_1 \parallel\ldots\parallel\scheduler_n}
\sum_{i=1}^k(c_i \cdot \Prob_{\widehat M_\hform,\overline{\scheduler}}(\varphi_i)).$}

Now, after constructing the MDP $\widehat M_\hform$, our goal is to check that for every scheduler assignment $\sassign$ for the MDP $M$ such that $\overline{\scheduler}= \scheduler_1 \parallel\ldots\parallel\scheduler_n\in\Sched(\widehat M_\hform)$ the inequality
$\sum_{i=1}^k(c_i \cdot \Prob_{\widehat M_\hform,\overline{\scheduler}}(\varphi_i)) \leq c$ is satisfied. That would mean, intuitively, that every scheduler assignment that satisfies $\hform$ also satisfies the above inequality, which is the property stated by $\qform$.

Note that, if we establish that $\max_{\overline{\scheduler}= \scheduler_1 \parallel\ldots\parallel\scheduler_n}
\sum_{i=1}^k(c_i \cdot \Prob_{\widehat M_\hform,\overline{\scheduler}}(\varphi_i))\leq c$, then we have established the above property. 
Computing exactly the value $\max_{\overline{\scheduler}= \scheduler_1 \parallel\ldots\parallel\scheduler_n}
\sum_{i=1}^k(c_i \cdot \Prob_{\widehat M_\hform,\overline{\scheduler}}(\varphi_i))$, however, is not algorithmically possible in light of the undecidability results established in the previous section. Therefore, we will overapproximate this value by computing a value $c^* \geq \max_{\overline{\scheduler}= \scheduler_1 \parallel\ldots\parallel\scheduler_n}
\sum_{i=1}^k(c_i \cdot \Prob_{\widehat M_\hform,\overline{\scheduler}}(\varphi_i))$ and if $c^* \leq c$, then we can conclude that the property holds.
The value $c^*$ is computed as 

\begin{equation}\label{eq:scheduler-overapprox-max}
c^* =  \max_{\widehat\scheduler \in \Sched(\widehat{M}_\hform)}
\sum_{i=1}^k(c_i \cdot \Prob_{\widehat M_\hform,\widehat\scheduler}(\varphi_i)). 
\end{equation}

Note that for the schedulers $\widehat{\scheduler}$ considered in the maximization of~(\ref{eq:scheduler-overapprox-max}) it is not in general possible to decompose $\widehat{\scheduler}$ into schedulers $\scheduler_1,\ldots,\scheduler_n \in \Sched(M)$. Therefore we have the following inequality
\[
\max_{\widehat\scheduler \in \Sched(\widehat{M}_\hform)}
\sum_{i=1}^k(c_i \cdot \Prob_{\widehat M_\hform,\widehat\scheduler}(\varphi_i)) \geq\
\max_{\overline{\scheduler}= \scheduler_1 \parallel\ldots\parallel\scheduler_n}
\sum_{i=1}^k(c_i \cdot \Prob_{\widehat M_\hform,\overline{\scheduler}}(\varphi_i)),
\]
which implies that $c^*$ has the desired property.

To compute the value defined in~(\ref{eq:scheduler-overapprox-max}) we proceed as follows.

\begin{enumerate}
\item {\bf Deterministic Rabin automata for $\pathform_1,\ldots,\pathform_k$.} According to the definition of \phl, each of the formulas $\pathform_i$ is actually an LTL formula over atomic propositions in $\AP_{\varss}$. Thus, for each $\pathform_i$ we can construct a  deterministic Rabin automaton $\mathcal A_i = (Q_i,2^{\AP_{\varss}},\delta_i,q_{i,0},(B_{i,j},G_{i,j})_{j=1}^{m_i})$, such that for every $w$ we have $w \in \mathcal{L}(\mathcal A_i)$ if and only if $w \models \pathform_i$.

\item {\bf Product MDP for $\widehat{M}_\hform$ and $\mathcal A_1,\ldots ,\mathcal A_k$.}
We now compute the product of the MDP $\widehat M_{\hform}$ constructed earlier and the automata $\mathcal A_1,\ldots ,\mathcal A_k$ which is the MDP $\widetilde M = \widehat{M}_\hform \otimes \mathcal A_1 \otimes\ldots \otimes \mathcal A_k$ defined as given below.
\[\widetilde M=\widehat{M}_\hform \otimes \mathcal A_1 \otimes\ldots \otimes \mathcal A_k = 
(\widetilde S,\Act^n,\widetilde \trans,\widetilde \init, Q_1 \times \ldots \times Q_k,\widetilde L), \text{ where} \]
\begin{itemize}
\item $\widetilde S = \widehat S_\hform \times Q_1 \times \ldots \times Q_k$;

%\bigskip

\item $\widetilde\trans ((\widehat s,q_1,\ldots,q_k),a,(\widehat s',q_1',\ldots,q_k')) = \begin{cases}
 \widehat\trans_\hform(\widehat s,a,\widehat s') & \text{if } q'_i = \delta_i(q_i,\widehat L_\hform(\widehat s'))\\& \quad\text{for }i=1,\ldots,k\\
 0 & \text{otherwise};
 \end{cases}$
 
 %\bigskip
 
 \item $\widetilde\init((\widehat s,q_1,\ldots,q_k))= \begin{cases}
 \widehat\init_\hform(\widehat s) & \text{if } q = \delta_i(q_{i,0},\widehat L_\hform(\widehat s))\text{ for }i=1,\ldots,k\\
 0 & \text{otherwise};
  \end{cases}$

%\bigskip

\item $\widetilde L((\widehat s,q_1,\ldots,q_k)) = \{(q_1,\ldots,q_k)\}$.

\end{itemize}

%\bigskip

\item {\bf Success sets for combinations of $\pathform_1,\ldots,\pathform_k$.} Now we consider each combination of formulas in $\{\pathform_1,\ldots,\pathform_k\}$, i.e., each subset $I \subseteq \{1,\ldots, k\}$ such that $I \neq \emptyset$. For each $I$, the conjunction of the accepting  conditions of the deterministic Rabin  automata $\mathcal A_i$ for $i \in I$ is 
\begin{equation}\label{eq:combined-property}
\bigwedge_{i \in I}\bigvee_{1\leq j \leq m_i} (\LTLfinally\LTLglobally \neg \widetilde B_{i,j} \wedge \LTLglobally\LTLfinally \widetilde G_{i,j}),
\end{equation}
where $\widetilde B_{i,j} = \{(s,q_1,\ldots,q_k)\in\widetilde{S}\mid q_i \in B_{i,j}\}$ and similarly for $\widetilde G_{i,j}$.

The above property can be rewritten as 
\[\bigvee_{(j_i \in \{1,\ldots,m_j\})_{i \in I}}\Big(
\big(\LTLfinally\LTLglobally \bigwedge_{i \in I} \neg \widetilde B_{i,j_i}\big) \wedge
\bigwedge_{i \in I}\LTLglobally\LTLfinally \widetilde G_{i,j_i}
\Big).\]
With this representation, we apply the methods described in~\cite{PrinciplesOfModelChecking} to compute the so called \emph{success set} $U_I\subseteq \widetilde S$ for the property~(\ref{eq:combined-property}) for each $I \subseteq \{1,\ldots, k\}$. Intuitively, in $U_I$ there exists a scheduler that can enforce the conjunction of the properties whose indices are in the set $I$.
% (and possibly some of the remaining properties but this is not of importance).

%\bigskip

\item {\bf Linear program for optimal scheduler.} Finally, in order to compute the value defined in~(\ref{eq:scheduler-overapprox-max})
we solve the following linear program.

\begin{equation}\label{eq:optimization-problem}
\text{minimize } \sum_{\widetilde s \in \widetilde S} x_{\widetilde s} \text{ subject to:}
\end{equation}

\[
\begin{array}{llll}
x_{\widetilde s}  &\geq & 0 &\text{ for all } \widetilde s \in \widetilde S\\&&\\
x_{\widetilde s}  &\geq & \sum_{i \in I} c_i &\text{ for all } I \subseteq \{1,\ldots, k\}\text{ and } \widetilde s \in U_I \\&&\\
x_{\widetilde s} &\geq& \sum_{\widetilde t \in \widetilde S} \trans(\widetilde s,a,\widetilde t) \cdot x_{\widetilde t} &\text{ for all } \widetilde s \in \widetilde S \text{ and } a \in \Act^n.
\end{array}
\]
Let $(x_{\widetilde s}^*)_{s\in \widetilde S}$ be the optimal solution of the linear program~(\ref{eq:optimization-problem}), and let 
\[c^* = \sum_{\widetilde s \in\widetilde S} \widetilde\init(\widetilde s) \cdot x_{\widetilde s}^*.\]

\end{enumerate}

\item {\bf Checking $M \stackrel{?}{\models}\qform.$}
  If $c^* \leq c$, then for all tuples of schedulers $\scheduler_1,\ldots,\scheduler_n$ we have that if 
$M_{\scheduler_1 \parallel\ldots\parallel\scheduler_n} \models\hform$, then for their product $\overline{\scheduler}= \scheduler_1 \parallel\ldots\parallel\scheduler_n$ it holds that $\sum_{i=1}^k
(c_i \cdot \Prob_{M^n,\overline{\scheduler}}(\varphi_i))\leq c$, and we conclude that $M \models \qform$.

If, on the other hand, we have  $c^* > c$, then the result is inconclusive.

\end{enumerate}

In the appendix we establish the correctness of the model checking procedure for establishing $M \models \qform$ for formulas of the form~(\ref{eq:universal-fragment}), stated in the next theorem.

\begin{theorem}[Correctness]\label{thm-correctness-overapprox}
Let $\widetilde M =  \widehat{M}_\hform \otimes \mathcal A_1 \otimes\ldots \otimes \mathcal A_k$ be the MDP constructed above, and let $(x_{\widetilde s}^*)_{s\in \widetilde S}$ be the optimal solution to the linear program~(\ref{eq:optimization-problem}). If $\bowtie \in \{\leq, <\}$, then it holds that
\[\sum_{\widetilde s \in\widetilde S} \widetilde\init(\widetilde s) \cdot x^*_{\widetilde s} \bowtie c \;\text{ implies }\ M\models\forall \sigma_1\ldots\forall\sigma_n.\ \big(\hform \rightarrow \sum_{i=1}^k c_i \cdot \prob(\pathform_i) \bowtie c\big).\]
\end{theorem}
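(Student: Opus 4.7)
Let $c^{*} = \sum_{\widetilde s}\widetilde\init(\widetilde s)\,x^{*}_{\widetilde s}$. The plan is to establish that $c^{*}$ equals $\max_{\widehat\scheduler\in\Sched(\widehat M_\hform)}\sum_{i=1}^{k} c_i\,\Prob_{\widehat M_\hform,\widehat\scheduler}(\varphi_i)$, which in turn upper-bounds $\sum_{i=1}^{k} c_i\,\Prob_{M^{n},\overline\scheduler}(\varphi_i)$ for every composite scheduler $\overline\scheduler=\scheduler_1\parallel\cdots\parallel\scheduler_n$ whose induced Markov chain satisfies $\hform$, and then to conclude using the assumed inequality on $c^{*}$.

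The first identity is the standard linear-programming characterization of the maximum expected $\omega$-regular reward in MDPs (cf.\ Chapter~10 of~\cite{PrinciplesOfModelChecking}). The success set $U_I$ is by construction the set of states from which the conjunction $\bigwedge_{i\in I}\varphi_i$ can be enforced almost surely, so the reward $\sum_{i\in I}c_i$ is achievable upon reaching such a state; the Bellman-type constraint $x_{\widetilde s}\ge\sum_{\widetilde t}\widetilde\trans(\widetilde s,a,\widetilde t)\,x_{\widetilde t}$ captures the expected reward under each action; and the minimization in~(\ref{eq:optimization-problem}) selects the least upper bound, which by end-component analysis coincides with the optimal value attained by a memoryless deterministic scheduler of $\widetilde M$. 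Determinism of each Rabin automaton $\mathcal A_i$ makes the projection $\widetilde M\to\widehat M_\hform$ a bijection on schedulers preserving both the induced Markov chain and the measure of each $\varphi_i$, which lifts the identity to $\widehat M_\hform$.

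For the inequality, let $\sassign=((\sigma_i,\scheduler_i))_{i=1}^{n}$ and suppose $M,\sassign\models\hform$. By the semantics of the path quantifiers $\forall\pi_i{:}\sigma_i$, every infinite path of $M^{n}_{\overline\scheduler}$ has label sequence in $\mathcal L(\mathcal D_\psi)$; since $\mathcal D_\psi$ is a deterministic safety automaton, its run on any such path stays in the non-rejecting region, and hence $\overline\scheduler$ lifts canonically to a scheduler $\widehat\scheduler$ of $\widehat M_\hform$ along which no action has been removed by the iterative elimination. The product is probability-preserving, so $\Prob_{M^{n},\overline\scheduler}(\varphi_i)=\Prob_{\widehat M_\hform,\widehat\scheduler}(\varphi_i)$ for each $i$, and the inequality follows by plugging $\widehat\scheduler$ into the maximum. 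Assembling the chain: if $c^{*}\bowtie c$ with $\bowtie\in\{\le,<\}$, then every composite scheduler satisfying $\hform$ also satisfies $\sum_i c_i\,\prob(\varphi_i)\bowtie c$ when evaluated in $M_\sassign=M^{n}_{\overline\scheduler}$, which is exactly $M\models\qform$.

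The principal obstacle is the rigorous justification of the LP correctness for the weighted sum of Rabin-style objectives in the first identity; while each ingredient is classical, one must argue that entering an end component inside a maximal $U_I$ genuinely realizes the reward $\sum_{i\in I}c_i$ and that the LP selects exactly this value. The remaining steps are essentially bookkeeping, made clean by the determinism of $\mathcal D_\psi$ and each $\mathcal A_i$, and by the fact that the $n$-self-composition $M^{n}$ exactly captures the $n$-tuples of paths ranged over by the universal quantifiers in $\hform$.
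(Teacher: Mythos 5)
Your proposal follows essentially the same route as the paper's own argument: it decomposes the claim into (a) the identity between the optimal LP value and $\max_{\widehat\scheduler\in\Sched(\widehat M_\hform)}\sum_i c_i\,\Prob_{\widehat M_\hform,\widehat\scheduler}(\varphi_i)$ via the success sets $U_I$ and total-expected-reward arguments (the paper's Proposition on the linear program, proved there by adding absorbing reward states), and (b) the overapproximation step that any composite scheduler whose induced chain satisfies $\hform$ lifts to a scheduler of $\widehat M_\hform$ (the paper's proposition on the safety-automaton product plus the inequality between the two maxima). The step you flag as the principal obstacle is exactly the part the paper also only sketches, so the proposal is correct and essentially matches the paper's proof.
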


The success set $U_I$ for each of the sets $I \subseteq \{1,\ldots,k\}$ can be computed in time polynomial in the size of $\widetilde M$. Since we can include in (\ref{eq:optimization-problem}) for each $\widetilde s \in \widetilde S$ only the inequality $x_{\widetilde s}  \geq  \sum_{i \in I} c_i$ with the largest $\sum_{i \in I} c_i$, the size of the constraint system in (\ref{eq:optimization-problem}) is polynomial in the size of $\widetilde M$. Thus, we have the following result.

\begin{theorem}[Complexity]\label{thm-complexity-overapprox}
Given an MDP $M = (S,\Act,\trans,\init,\AP,L)$ and a \phl formula $\qform$ of the form~(\ref{eq:universal-fragment}) the model checking procedure above runs in time polynomial in the size of $M$ and doubly exponential in the size of~$\qform$. 
\end{theorem}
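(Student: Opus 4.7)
The plan is to bound the size of every intermediate structure produced by the procedure, argue that the computation at each step is polynomial in those sizes, and then compose the bounds. I will keep $|M|$ and $|\qform|$ as separate parameters throughout, to make the dependency explicit in the final statement.

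First I would bound the automata constructions driven by $\qform$. The formula $\psi$ is a quantifier-free temporal formula of size $O(|\qform|)$, and the deterministic safety automaton $\mathcal D_\psi$ can be obtained with at most doubly exponential blow-up, so $|\mathcal D_\psi| \le 2^{2^{O(|\qform|)}}$. For each LTL path formula $\pathform_i$ (of size $O(|\qform|)$) the fact quoted in Section~\ref{sec:automata} gives a deterministic Rabin automaton $\mathcal A_i$ with $|\mathcal A_i| \le 2^{2^{O(|\qform|)}}$. Since $k \le |\qform|$, the overall size of $\mathcal A_1,\ldots,\mathcal A_k$ together is still $2^{2^{O(|\qform|)}}$.

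Next I would bound the MDPs. The $n$-self-composition $M^n$ has $|S|^n$ states and action set $|\Act|^n$. Since $n \le |\qform|$, the quantities $|S|^n$ and $|\Act|^n$ are polynomial in $|M|$ (of degree at most $|\qform|$) and at most singly exponential in $|\qform|$. The product MDP $\widehat M_\hform = M^n \otimes \mathcal D_\psi$ therefore has size polynomial in $|M|$ and at most doubly exponential in $|\qform|$, and the final product $\widetilde M = \widehat M_\hform \otimes \mathcal A_1 \otimes \cdots \otimes \mathcal A_k$ satisfies the same asymptotic bound since we are only multiplying by $\prod_{i=1}^{k} |\mathcal A_i| \le 2^{2^{O(|\qform|)}}$.

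It remains to bound the post-processing on $\widetilde M$. The success set $U_I$ for each $I\subseteq\{1,\ldots,k\}$ is computed by standard maximal end-component based algorithms~\cite{PrinciplesOfModelChecking} in time polynomial in $|\widetilde M|$. There are $2^k \le 2^{|\qform|}$ such subsets, which is absorbed into the doubly exponential factor. As noted in the paper, for each state $\widetilde s$ we only need to retain in the linear program~(\ref{eq:optimization-problem}) the tightest constraint $x_{\widetilde s}\ge \max\{\sum_{i\in I}c_i \mid \widetilde s \in U_I\}$, so the number of inequalities is $O(|\widetilde S| \cdot |\Act^n|)$, polynomial in $|\widetilde M|$. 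The linear program can thus be solved in time polynomial in $|\widetilde M|$, and the final comparison $c^* \bowtie c$ is constant time.

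Composing, the total running time is polynomial in $|\widetilde M|$, and $|\widetilde M|$ is polynomial in $|M|$ and doubly exponential in $|\qform|$, yielding the claimed bound. The only step that requires care is ensuring that the bookkeeping for $n$ and $k$ (both bounded by $|\qform|$) does not secretly push the dependence on $|M|$ beyond polynomial; this is fine because the exponent $n$ only appears in $|M|^n$, which for any fixed $\qform$ is polynomial in $|M|$, and the extra factors in $n$ and $k$ are charged to the doubly exponential blow-up in $|\qform|$ rather than to $|M|$.
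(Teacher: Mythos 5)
Your proposal is correct and follows essentially the same route as the paper, which justifies the theorem by the short remark preceding it: the success sets and the (pruned) linear program are polynomial in $|\widetilde M|$, and $|\widetilde M|$ is polynomial in $|M|$ and doubly exponential in $|\qform|$ via the $n$-self-composition and the deterministic automata constructions. Your version merely spells out the intermediate size bounds that the paper leaves implicit, including the correct observation that the degree of the polynomial in $|M|$ depends on $n\le|\qform|$.
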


Note that when $M\not\models\qform$ the result of the above procedure will be inconclusive. In the next section we present a bounded model checking procedure which can be used to search for counterexamples to \phl formulas of the form~(\ref{eq:universal-fragment}).
\label{sec:algorithm-overapprox}

\section{Bounded Model Checking}
We present a bounded model-checking procedure for \phl formulas of the form
\begin{equation}\label{eq:existential-fragment}
\qform  = \exists \sigma_1\ldots\exists\sigma_n.\ 
\big(\hform \wedge 
c_1 \cdot \prob(\pathform_1) + \ldots + c_k \cdot \prob(\pathform_k) \bowtie c
\big)
\end{equation}
where $\hform = \forall \pi_1:{\sigma_1}\ldots\forall\pi_n:{\sigma_n}.\;\psi$ is in the $\forall^*$ fragment of \hyperltl~\cite{fhlst18}. Examples of formulas in this fragment are the formulas in Example~\ref{ex:cause} and Example~\ref{ex:hard-soft}.
By finding a scheduler assignment that is a witness for a  \phl formula of the form~(\ref{eq:existential-fragment}) we can find counterexamples to \phl formulas of the form~(\ref{eq:universal-fragment}).

Given an MDP $M = (S,\Act,\trans,\init,\AP,L)$, a bound $b \in \mathbb{N}$, and a \phl formula $\qform = \exists \sigma_1\ldots\exists\sigma_n.\ \big(\hform \wedge c_1 \cdot \prob(\pathform_1) + \ldots + c_k \cdot \prob(\pathform_k) \bowtie c\big)$, the \emph{bounded model checking problem for $M,b$ and $\qform$} is to determine whether there exists a \emph{deterministic finite-memory scheduler} $\widetilde \scheduler = \scheduler_1||\dots||\scheduler_n$ for $M^n$ composed of deterministic finite-memory schedulers $\scheduler_i = ( Q^i, \delta^i,  q^i_0,\mathit{act}_i)$ for $M$ for $i\in \{1,\dots,n\} $, with $|\scheduler| = b$ such that 
$M^n_{\widetilde\scheduler} \models \hform \wedge \sum_{i=1}^k 
(c_i \cdot \prob(\pathform_i)) \bowtie c$. 

\begin{figure}[t]
\centering
\scalebox{.8}[.8]{
\begin{tikzpicture}
	\node[draw, align=center,thick](init)at(0,0){create \\ consistency constraint}; 
	\node[](chi) at (-.5,1){$\hform$};
	\path[draw,->,thick] (chi.south) to (-.5,0.5);
	\node[](M) at (.5,1){$M$};
	\path[draw,->,thick] (M.south) to (.5,0.5);
	
	\node[draw,align=center,thick](synthesis)at(0,-2.5){\hyperltl \\ synthesis};
	\path[draw,->](init) edge node [left] {$\varphi_M^\hform$} (synthesis);
	\node[](b) at (-1.5,-2.5){$b$};
	\path[draw,->,thick] (b) to (synthesis.west);
	\node[](unreal)at(2.5,-2.5) {unrealizable};
	\path[draw,->](synthesis.east) to (unreal);
	
	\node[draw,align=center,thick](selfcomp)at(4,0){construct \\ self-composition};
	\node[](n) at (3.5,1){$n$};
	\path[draw,->,thick] (n.south) to (3.5,0.5);
	\node[](M2) at (4.5,1){$M$};
	\path[draw,->,thick] (M2.south) to (4.5,0.5);
	
	\node[draw,align=center,thick](applysched)at(4,-1.5){apply scheduler};
	\path[draw](selfcomp.south) edge [->] node[ left]{$M^n$}(applysched.north);
	\path[draw](synthesis.east) edge [->] node[above left]{$\widetilde \scheduler$}(applysched.west);
	
	\node[draw,align=center,thick](mc)at(8,-2.5){probabilistic \\ model checking};
	\path[draw](applysched.east) edge [->] node[ above right]{$\widetilde C$}(mc.west);
	\node[](P)at(8,-1.5){$P\bowtie c$};
	\path[draw,->,thick] (P.south) to (8,-2);
	\node[](result)at(10,-2.5) {$\checkmark$};
	\path[draw,->](mc.east) to (result);
	\node[draw,dotted]at(8,0){$\varphi= \exists \sigma_1\ldots \exists \sigma_n.~\hform \wedge P \bowtie c$};
	
	\draw[->](mc.south) -- (8,-3.5) -- (0,-3.5)-- (synthesis.south);
	\node[]at (4,-3.8){$\overline\varphi_{\widetilde \scheduler}$};
\end{tikzpicture}
}
\caption{Bounded model checking of MDPs against \phl formulas for the form~(\ref{eq:existential-fragment}).}
\label{fig:BMCForPHL}
\end{figure}
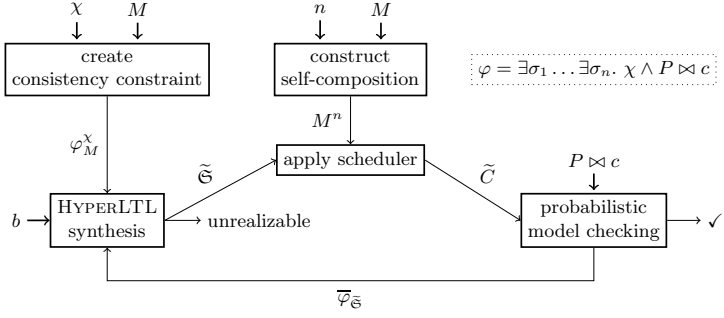

Our bounded model checking procedure employs bounded synthesis for the logic \hyperltl \cite{fhlst18} and model checking of Markov chains~\cite{probabilisticModelChecking}.  The flow of our procedure is depicted in \Cref{fig:BMCForPHL}. 
The procedure starts by checking whether there is a scheduler $\widetilde\scheduler$ for $M^n$ composed of schedulers $\scheduler_1,\dots, \scheduler_n$ for $M$ that satisfies the constraint given by the hyperproperty~$\hform$. 
This is done by synthesizing a scheduler of size $b$ for the  \hyperltl formula  $\varphi_M^{\hform}$ composed of the formula $\hform$, an encoding of $M$, which ensures  that the schedulers $\scheduler_1,\dots, \scheduler_n$ defining $\widetilde \scheduler$ follow the structure of $M$, and an additional consistency constraint that requires $\widetilde \scheduler$ to be a composition of $n$ schedulers $\scheduler_1,\dots, \scheduler_n$ for $M$. 

The formula $\varphi_M^\hform$ is constructed as follows. Let $\hform = \forall \pi_1\colon\sigma_{1}\dots \forall \pi_n\colon\sigma_{n}.~\psi$. Then $\varphi_M^\hform = \forall \pi_1 \dots \forall \pi_n.~\psi \wedge \psi_M \wedge \psi_{\mathit{consistent}}$, where 
\begin{itemize}
	\item $\psi_M$ encodes the transitions of the MDP $M$ and is given as the formula
	$$\bigwedge \limits_{1\leq i\leq n}~\bigwedge \limits_{s\in S} \LTLsquare ((s_{\sigma_i})_{\pi_i} \rightarrow \bigwedge \limits_{a \in \Act \setminus \Act_s} \neg {(a_{\sigma_i})}_{\pi_i})$$
	where $\Act_s = \{a \in \Act \mid\exists s'\in S.~\trans(s,a,s') \not =0\}$ for $s\in S$.
	
	\item $\psi_{\mathit{consistent}}$ specifies that it should be possible to decompose the synthesized scheduler to $n$ schedulers for $M$ and is given by 
	\begin{align*}
		\bigwedge \limits_{1\leq i\leq n}~\bigwedge \limits_{a\in \Act} ~& (a_{\sigma_i})_{\pi_1} =(a_{\sigma_i})_{\pi_2}  ~\LTLweakuntil~ (s_{\sigma_i})_{\pi_1} \not = (s_{\sigma_i})_{\pi_2}. \\
	\end{align*}
		
\end{itemize}

If $\varphi_M^\hform$ is realizable, then the procedure proceeds by applying the synthesized scheduler $\widetilde \scheduler$ to the 
$n$-self-composition of the MDP $M$, which results  in a Markov chain $\widetilde C= M^n_{\widetilde\scheduler}$. To check whether the synthesized scheduler also satisfies  the probabilistic constraint $P\bowtie c$, we apply a probabilistic model checker to the  Markov chain $\widetilde C$ to compute for each $\pathform_i$ the probability $\Prob_{\widetilde C}(\pathform_i)$, and then we evaluate the probabilistic predicate $P \bowtie c$. If $\widetilde C $ satisfies $P \bowtie c$, then $M^n_{\widetilde\scheduler} \models \hform \wedge \sum_{i=1}^k (c_i \cdot \prob(\pathform_i)) \bowtie c$, implying that $M\models\qform$. If not, we return back to the synthesizer to construct a new scheduler. In order to exclude the scheduler $\widetilde \scheduler$ from the subsequent search, a new constraint $\overline\varphi_{\widetilde \scheduler}$ is added to $\varphi_M^\hform$. The formula $\varphi_{\widetilde \scheduler}$ imposes the requirement that the synthesized scheduler should be different from  $\widetilde \scheduler$. 
This process is iterated until a scheduler that is a witness for $\qform$ is found, or all schedulers within the given bound $b$ have been checked. 
The complexity of the procedure is given in the next theorem and follows from complexity of probabilistic model checking \cite{probabilisticModelChecking} and that of synthesis for \hyperltl\cite{fhlst18}.
\begin{theorem}[Complexity]
	Given an MDP $M = (S,\Act,\trans,\init,\AP,L)$, a bound $b \in \mathbb{N}$, and a \emph{\phl} formula $\qform = \exists \sigma_1\ldots\exists\sigma_n. \hform ~\wedge~ c_1 \cdot \prob(\pathform_1) + \ldots + c_k \cdot \prob(\pathform_k) \bowtie c$, the bounded model checking problem for $M,b$ and $\qform$ can be solved in time polynomial in the size of $M$, exponential in $b$, and exponential in the size of~$\qform$. 
\end{theorem}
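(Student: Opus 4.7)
The plan is to combine three ingredients: the complexity of bounded HyperLTL synthesis, the complexity of probabilistic LTL model checking on Markov chains, and a bound on the number of outer-loop iterations of the counterexample-guided search in Figure~\ref{fig:BMCForPHL}.

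First, I would analyze the cost of a single iteration. The synthesis call on $\varphi_M^\hform$ (augmented with the exclusion constraints accumulated from previous iterations) dominates. The subformulas $\psi_M$ and $\psi_{\mathit{consistent}}$ are of size polynomial in $|M|$ and $n$, and $n \leq |\qform|$, so $\varphi_M^\hform$ has size polynomial in $|M|$ and $|\qform|$. By the complexity of bounded synthesis for \hyperltl established in~\cite{fhlst18}, each synthesis call runs in time polynomial in $|M|$, exponential in $b$, and exponential in the size of the \hyperltl formula, hence exponential in $|\qform|$.

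Next, I would bound the cost of the probabilistic model-checking step. The induced Markov chain $\widetilde C = M^n_{\widetilde\scheduler}$ has at most $|S|^n \cdot b$ states, which is polynomial in $|M|$ and $b$ and exponential in $|\qform|$ (via $n$). Standard probabilistic LTL model checking on $\widetilde C$~\cite{probabilisticModelChecking} computes $\Prob_{\widetilde C}(\pathform_i)$ for each $i$ in time polynomial in $|\widetilde C|$ and exponential in $|\pathform_i|$, using the deterministic Rabin automaton construction followed by the standard end-component and reachability analysis; evaluating the linear combination and comparing with $c$ is then immediate. Summing over $i$ keeps the bound within polynomial in $|M|$ and $b$ and exponential in $|\qform|$.

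Finally, I would bound the number of iterations. Each unsuccessful iteration strictly excludes at least one deterministic finite-memory scheduler of size at most $b$ for $M^n$ from the subsequent search via the added constraint $\overline\varphi_{\widetilde\scheduler}$. The number of such schedulers is bounded by a function singly exponential in $b$, exponential in $|\qform|$ (through $n$), and polynomial in $|M|$. Multiplying the per-iteration cost by this iteration bound yields the claimed overall complexity. The main subtlety, and the step I expect to require the most care, is ensuring that the exclusion constraints $\overline\varphi_{\widetilde\scheduler}$ accumulated across iterations do not blow up $\varphi_M^\hform$ beyond the stated bound; since each such constraint can be encoded in size polynomial in $b$ and $|M|$, and the total number of iterations is already absorbed into the claimed bounds, the aggregated formula size stays within polynomial in $|M|$, exponential in $b$, and exponential in $|\qform|$, which suffices.
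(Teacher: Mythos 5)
Your high-level decomposition --- per-iteration cost of one \hyperltl synthesis call plus one probabilistic model-checking call, multiplied by a bound on the number of iterations --- matches the spirit of the paper's argument, which in fact consists only of the remark that the claim follows from the complexity of probabilistic model checking~\cite{probabilisticModelChecking} and of bounded synthesis for \hyperltl~\cite{fhlst18}. Your per-iteration accounting is fine: $\varphi_M^\hform$ has size polynomial in $|M|$ and $|\qform|$, the synthesis call is polynomial in $|M|$ and exponential in $b$ and $|\qform|$, and model checking the induced chain $\widetilde C$ is polynomial in its size and exponential in the $\pathform_i$.

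However, your iteration bound contains a genuine error. A deterministic finite-memory scheduler with $b$ memory states for $M$ is determined by $\mathit{act}:Q\times S\to\Act$ and $\delta:Q\times S\times\Act\to Q$, so the number of such schedulers is of order $|\Act|^{b|S|}\cdot b^{b|S||\Act|}$, and for an $n$-tuple of them it is the $n$-th power of this. That quantity is exponential in $|S|$ and $|\Act|$, i.e.\ exponential in the size of $M$, not ``polynomial in $|M|$'' as you assert. Hence multiplying the per-iteration cost by your iteration count does not preserve the ``polynomial in the size of $M$'' part of the statement: in the worst case the loop enumerates exponentially many (in $|M|$) candidates. A second, related issue is the one you flag as the main subtlety but resolve too quickly: if the number of iterations is exponential, then the accumulated exclusion constraints $\overline\varphi_{\widetilde\scheduler}$ make the input to the synthesis call exponentially large, and since synthesis is itself exponential in the size of its input formula, naive composition yields a doubly exponential bound; one would need to argue that excluded candidates are pruned at the level of the bounded-synthesis constraint system rather than appended to the \hyperltl specification. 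As it stands, your argument (like the paper's one-sentence justification) cleanly supports the claimed bound only for a single iteration of the loop.
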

\label{sec:algorithm-bounded}
\subsection{Evaluation}\label{sec:experiments}
\begin{table}[t]
\caption{Experimental results from model checking plan non-interference.}
\begin{center}
\begin{tabular}{|l||c|c|c|c|}
	\hline 
	Benchmark & MDP Size & \# Iterations. & Synthesis time (s) &  Model checking time (s) \\
	\hline
	{Arena 3} & 16 & 6 & 12.04 & 2.68\\
	\hline
	{Arena 4} & 36 & 5 & 17.23 & 2.19\\
	\hline
	{Arena 5} & 81 & 5 & 18.49 & 2.76\\
	\hline
	{Arena 7} & 256 & 5 & 19.46 & 3.01\\
	\hline
	{Arena 9} & 625 & 7  & 168.27 & 4.72\\
	\hline
	{3-Robots Arena 4} & 36 & 9 & 556.02 & 4.5\\
	\hline
\end{tabular}
\end{center}
\label{tab:plannoninterference}
\end{table}

We developed a proof-of-concept implementation of the approach in \Cref{fig:BMCForPHL}. We used the tool BoSyHyper~\cite{fhlst18} for the scheduler synthesis and the tool PRISM~\cite{prism} to model check the Markov chains resulting from applying the synthesized scheduler to the self-composition of the input MDP. For our experiments, we used a machine with 3.3 GHz dual-core Intel Core i5 and 16 GB of memory.  

\Cref{tab:plannoninterference} shows the results of model checking the ``plan non-interference'' specification introduced in \Cref{sec:examples-logic} against MDP's representing two robots that try to reach a designated cell on grid arenas of different sizes ranging from 3-grid to a 9-grid arena. In the last instance, we increased the number of robots to three to raise the number of possible schedulers. The specification thus checks whether the probability for a robot to reach the designated area changes with the movements the other robots in the arena. In the initial state, every robot is positioned on a different end of the grid, i.e., the farthest point from the designated cell.

In all instances in \Cref{tab:plannoninterference}, the specification with $\varepsilon = 0.25$ is violated. We give the number of iterations, i.e., the number of schedulers synthesized, until a counterexample was found. 
The synthesis and model checking time represent the total time for synthesizing and model checking all schedulers. 
\Cref{tab:plannoninterference} shows the feasibility of approach, however, it also demonstrates the inherent difficulty of the synthesis problem for hyperproperties.

\begin{table}\caption{Detailed experimental results for the 3-Robots Arena~4 benchmark.}
\begin{center}
\begin{tabular}{|c||c|c|}
	\hline
	Iteration & Synthesis (s) & Model checking (s)\\
	\hline
	1 & 3.723 & 0.504\\
	\hline
	2 & 3.621 & 0.478\\
	\hline
	3 & 3.589 & 0.469\\
	\hline
	4 & 3.690 & 0.495\\
	\hline
	5 & 3.934 & 0.514\\
	\hline
	6 & 4.898 & 0.528\\
	\hline
	7 & 11.429 & 0.535\\
	\hline
	8 & 60.830 & 0.466\\
	\hline
	9 & 460.310 & 0.611\\
	\hline
\end{tabular}
\end{center}
\label{tab:tworobots}
\end{table}

\Cref{tab:tworobots} shows that  the time needed for the overall model checking approach is dominated by the time needed for synthesis: The time  for synthesizing a scheduler quickly increases in the last iterations, while the time for model checking the resulting Markov chains remains stable for each scheduler. 
Despite recent advances on the synthesis from hyperproperties~\cite{fhlst18}, synthesis tools for hyperproperties are still in their infancy. \phl model checking will directly profit from future progress on this problem.

\section{Conclusion}\label{sec:conclusion}
We presented a new logic, called \phl, for the specification of probabilistic temporal hyperproperties. The novel and distinguishing feature of our logic is the combination of quantification over both schedulers and paths, and a probabilistic operator. This makes \phl uniquely capable of specifying hyperproperties of MDPs. \phl is capable of expressing interesting properties both from the realm of security and from the planning and synthesis domains. While, unfortunately, the expressiveness of \phl comes at a price as the model checking problem for \phl is undecidable, we show how to approximate the model checking problem from two sides by providing sound but incomplete procedures for proving and for refuting universally quantified \phl formulas. We developed a proof-of-concept implementation of the refutation procedure and demonstrated its principle feasibility on an example from planning.

We believe that this work opens up a line of research on the verification of MDPs against probabilistic hyperproperties. One direction of future work is identifying fragments of the logic or classes of models that are practically amenable to verification. Furthermore, investigating the connections between \phl and simulation notions for MDPs, as well studying the different synthesis questions expressible in \phl are both interesting avenues for future work.  

\bibliographystyle{abbrv}
\bibliography{bib}

\begin{thebibliography}{10}

\bibitem{abbd20}
E.~Abraham, E.~Bartocci, B.~Bonakdarpour, and O.~Dobe.
\newblock Probabilistic hyperproperties with nondeterminism.
\newblock In {\em Automated Technology for Verification and Analysis, {ATVA}
  2020, Proc.}, 2020.

\bibitem{DBLP:conf/qest/AbrahamB18}
E.~{\'{A}}brah{\'{a}}m and B.~Bonakdarpour.
\newblock {HyperPCTL}: {A} temporal logic for probabilistic hyperproperties.
\newblock In {\em Quantitative Evaluation of Systems, {QEST} 2018, Proc.},
  2018.

\bibitem{PSL-ijcai2019}
B.~Aminof, M.~Kwiatkowska, B.~Maubert, A.~Murano, and S.~Rubin.
\newblock Probabilistic strategy logic.
\newblock In {\em Proceedings of the Twenty-Eighth International Joint
  Conference on Artificial Intelligence, {IJCAI} 2019}, pages 32--38, 2019.

\bibitem{DBLP:conf/ifm/Baier10}
C.~Baier.
\newblock On model checking techniques for randomized distributed systems.
\newblock In {\em Integrated Formal Methods, {IFM} 2010, Proc.}, volume 6396 of
  {\em LNCS}, 2010.

\bibitem{DecisionProblemsProbabilisticBuchiAutomata}
C.~Baier, N.~Bertrand, and M.~Gr{\"{o}}{\ss}er.
\newblock On decision problems for probabilistic b{\"{u}}chi automata.
\newblock In {\em {FOSSACS} 2008, Proc.}, volume 4962 of {\em LNCS}, 2008.

\bibitem{StochasticGameLogic}
C.~Baier, T.~Br{\'{a}}zdil, M.~Gr{\"{o}}{\ss}er, and A.~Kucera.
\newblock Stochastic game logic.
\newblock {\em Acta Informatica}, 49(4):203--224, 2012.

\bibitem{PrinciplesOfModelChecking}
C.~Baier and J.~Katoen.
\newblock {\em Principles of model checking}.
\newblock {MIT} Press, 2008.

\bibitem{StochasticGamesBranchingTimeObjectives}
T.~Br{\'{a}}zdil, V.~Brozek, V.~Forejt, and A.~Kucera.
\newblock Stochastic games with branching-time winning objectives.
\newblock In {\em 21th {IEEE} Symposium on Logic in Computer Science {(LICS}
  2006), Proc.}, pages 349--358. {IEEE} Computer Society, 2006.

\bibitem{Clarkson+Finkbeiner+Koleini+Micinski+Rabe+Sanchez/2014/TemporalLogicsForHyperproperties}
M.~R. Clarkson, B.~Finkbeiner, M.~Koleini, K.~K. Micinski, M.~N. Rabe, and
  C.~S{\'{a}}nchez.
\newblock Temporal logics for hyperproperties.
\newblock In {\em Principles of Security and Trust, {POST} 2014, Proc.}, volume
  8414 of {\em LNCS}, pages 265--284, 2014.

\bibitem{Clarkson+Schneider/10/Hyperproperties}
M.~R. Clarkson and F.~B. Schneider.
\newblock Hyperproperties.
\newblock {\em J. Comput. Secur.}, 18(6):1157--1210, 2010.

\bibitem{DBLP:conf/cav/CoenenFST19}
N.~Coenen, B.~Finkbeiner, C.~S{\'{a}}nchez, and L.~Tentrup.
\newblock Verifying hyperliveness.
\newblock In {\em Computer Aided Verification, {CAV} 2019, Proc.}, volume 11561
  of {\em LNCS}, 2019.

\bibitem{MDPsAndRegularEvents}
C.~{Courcoubetis} and M.~{Yannakakis}.
\newblock Markov decision processes and regular events.
\newblock {\em IEEE Transactions on Automatic Control}, 43(10), Oct 1998.

\bibitem{DifferentialPrivacy}
C.~Dwork.
\newblock Differential privacy.
\newblock In H.~C.~A. van Tilborg and S.~Jajodia, editors, {\em Encyclopedia of
  Cryptography and Security, 2nd Ed}, pages 338--340. Springer, 2011.

\bibitem{fhlst18}
B.~Finkbeiner, C.~Hahn, P.~Lukert, M.~Stenger, and L.~Tentrup.
\newblock Synthesizing reactive systems from hyperproperties.
\newblock In {\em Computer Aided Verification, {CAV} 2018, Proc., Part {I}},
  volume 10981 of {\em LNCS}, pages 289--306, 2018.

\bibitem{fhs17}
B.~Finkbeiner, C.~Hahn, and M.~Stenger.
\newblock Eahyper: Satisfiability, implication, and equivalence checking of
  hyperproperties.
\newblock In {\em Computer Aided Verification, {CAV} 2017, Proc., Part {II}},
  volume 10427 of {\em LNCS}, pages 564--570, 2017.

\bibitem{frs15}
B.~Finkbeiner, M.~N. Rabe, and C.~S{\'{a}}nchez.
\newblock Algorithms for model checking {HyperLTL} and {HyperCTL} {\^{}}*.
\newblock In {\em Computer Aided Verification - 27th International Conference,
  {CAV} 2015, Proc., Part {I}}, volume 9206 of {\em LNCS}, pages 30--48, 2015.

\bibitem{Goguen+Meseguer/1982/SecurityPoliciesAndSecurityModels}
J.~A. Goguen and J.~Meseguer.
\newblock Security policies and security models.
\newblock In {\em 1982 {IEEE} Symposium on Security and Privacy}. {IEEE}
  Computer Society, 1982.

\bibitem{InfFlowSecurity}
J.~W. Gray.
\newblock Toward a mathematical foundation for information flow security.
\newblock {\em J. Comput. Secur.}, 1(3–4):255–294, May 1992.

\bibitem{probabilisticModelChecking}
M.~Kwiatkowska, G.~Norman, and D.~Parker.
\newblock Probabilistic model checking: Advances and applications.
\newblock In {\em Formal System Verification}. Springer, 2017.

\bibitem{prism}
M.~Z. Kwiatkowska, G.~Norman, and D.~Parker.
\newblock {PRISM} 4.0: Verification of probabilistic real-time systems.
\newblock In {\em Computer Aided Verification, {CAV} 2011, Proc.}, volume 6806
  of {\em LNCS}, pages 585--591, 2011.

\bibitem{InfFlowInteractivePrograms}
K.~R. O'Neill, M.~R. Clarkson, and S.~Chong.
\newblock Information-flow security for interactive programs.
\newblock In {\em 19th {IEEE} Computer Security Foundations Workshop,
  {(CSFW-19} 2006)}, pages 190--201. {IEEE} Computer Society, 2006.

\bibitem{LTL}
A.~Pnueli.
\newblock The temporal logic of programs.
\newblock In {\em 18th Annual Symposium on Foundations of Computer Science},
  pages 46--57. {IEEE} Computer Society, 1977.

\bibitem{SabelfeldPNI}
A.~Sabelfeld and D.~Sands.
\newblock Probabilistic noninterference for multi-threaded programs.
\newblock In {\em Proceedings of the 13th {IEEE} Computer Security Foundations
  Workshop, {CSFW} '00}, pages 200--214. {IEEE} Computer Society, 2000.

\bibitem{VolpanoPNI}
D.~M. Volpano and G.~Smith.
\newblock Probabilistic noninterference in a concurrent language.
\newblock {\em J. Comput. Secur.}, 7(1), 1999.

\bibitem{StatisticalMCHyperPCTL}
Y.~Wang, M.~Zarei, B.~Bonakdarpour, and M.~Pajic.
\newblock Statistical verification of hyperproperties for cyber-physical
  systems.
\newblock {\em {ACM} Trans. Embedded Comput. Syst.}, 18(5s):92:1--92:23, 2019.

\end{thebibliography}

\newpage 
\appendix 
\section{Randomized versus Deterministic Schedulers}\label{sec:rand-sched}

Let $M =  (\{s_0,s_1,s_2\},\{\alpha_1,\alpha_2\},\trans,\init,\{a\},L)$ be the MDP shown in Figure ~\ref{fig:mdp1}.

\begin{figure}\label{fig:mdp1}
\begin{center}
\begin{tikzpicture}
\node[draw,circle] (s0) {$s_0$};
\node[draw,circle, below left= of s0] (s1) {$s_1$};
\node[draw,circle, below right= of s0] (s2) {$s_2$};
\node[left= of s1,xshift=1cm] {$L(s_1)=\{a\}$};
\node[right= of s2,xshift=-1cm] {$L(s_2)=\emptyset$};
\node[right= of s0,xshift=-1cm] {$L(s_0)=\emptyset$};
\node[left= of s0,xshift=1cm] {$\init(s_0)=1$};

\draw 	(s0) edge[left,->] node {$\alpha_1,1$} (s1)
		(s0) edge[right,->] node {$\alpha_2,1$} (s2)
		(s1) edge[loop below] node {$\alpha_1,1$} (s1)
		(s2) edge[loop below] node {$\alpha_2,1$} (s2)
;

\end{tikzpicture}
\end{center}
\caption{MDP in which labels $\alpha,p$ on arrows from $s$ to $s'$ denote $\trans(s,\alpha,s')=p$.}
\end{figure}
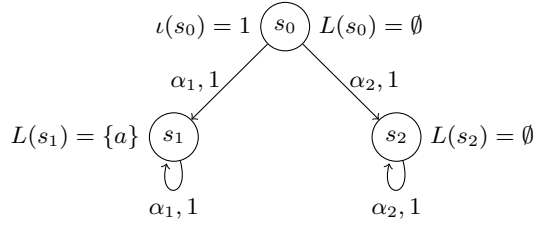

Consider the formula
$\qform=\exists\sigma.\;\exists\pi_1:\sigma.\;\exists\pi_2:\sigma.\; \LTLnext(a_{\pi_1} \wedge \neg a_{\pi_2}).$
If the quantifier is interpreted over $\Sched(M)$, then we have $M \models \qform$, as witnessed by a scheduler that in state $s_0$ chooses action $\alpha_1$ with probability $\frac{1}{2}$ and $\alpha_2$ with probability $\frac{1}{2}$. If the quantifier is interpreted over $\DSched(M)$ then $M \not\models \qform$ since for every deterministic scheduler $\scheduler$, the Markov chain $M_\scheduler$ has exactly one path.

\section{Additional Undecidability Results}
\label{sec:undecidabilityproofs}

\setcounter{theorem}{5}

\begin{theorem}\label{thm:undecidability-existential}
The model checking problem for \phl formulas of the form 
\[\exists \sigma_1\ldots\exists\sigma_n.\ 
\big((\forall \pi_1:{\sigma_1}\ldots\forall\pi_n:{\sigma_n}.\;\psi) \wedge \pexpr \bowtie c\big)\] is undecidable.
\end{theorem}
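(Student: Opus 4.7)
The plan is to adapt the probabilistic Büchi automaton (PBA) emptiness reduction used in the proof of Theorem~\ref{thm:undecidability-pba}. That reduction produced an \phl formula of the form
$\exists \sigma.\, \bigl((\exists \pi{:}\sigma.\,\forall \pi'{:}\sigma.\,\chi) \wedge \pexpr \bowtie c\bigr)$,
whose inner path-quantifier prefix is $\exists^1\forall^1$. To obtain undecidability within the purely-universal path-quantifier fragment required here, I would reuse the very same MDP construction but rewrite the path-quantifier prefix using two scheduler variables and only universal path quantifiers, i.e., take $n=2$.

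The key observation is that the \hyperltl constraint ``all paths produced by $\sigma$ carry the same sequence of labels'' can equivalently be expressed with only universal path quantifiers: since label-equality is symmetric, $\forall\pi{:}\sigma \forall\pi'{:}\sigma.\, \LTLglobally \bigwedge_{\alpha\in\Lambda}(\alpha_\pi \leftrightarrow \alpha_{\pi'})$ collapses to the blindness condition used in Theorem~\ref{thm:undecidability-pba}. Quantifying $\pi_1$ over $M_{\sigma_1}$ and $\pi_2$ over $M_{\sigma_2}$ strengthens this to ``every path of $M_{\sigma_1}$ shares its labels with every path of $M_{\sigma_2}$,'' which forces each induced chain to be blind and, moreover, forces both chains to produce the same infinite word $w \in \Lambda^\omega$. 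Concretely, given a PBA $\mathcal{B}$ I would build the same MDP $M$ as in Theorem~\ref{thm:undecidability-pba} and consider the formula
\[
\qform \;=\; \exists \sigma_1 \exists \sigma_2.\, \Bigl(\bigl(\forall \pi_1{:}\sigma_1 \forall \pi_2{:}\sigma_2.\, \LTLglobally \bigwedge_{\alpha\in\Lambda}(\alpha_{\pi_1} \leftrightarrow \alpha_{\pi_2})\bigr) \wedge \prob(\LTLglobally\LTLfinally f_{\sigma_1}) > 0\Bigr),
\]
which fits the shape stipulated by the theorem (with $n=2$, $k=1$, $c_1=1$, $c=0$, $\bowtie\,{=}\,{>}$).

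The correctness argument mirrors that of Theorem~\ref{thm:undecidability-pba}. For the $(\Leftarrow)$ direction, given $w$ with $\Prob_{\mathcal B}(w) > 0$, let $\scheduler_w$ be the deterministic blind scheduler that outputs $w$ regardless of history and set $\scheduler_1 = \scheduler_2 = \scheduler_w$; then both $M_{\scheduler_1}$ and $M_{\scheduler_2}$ produce only $w$, so the universal conjunct is satisfied, and by the same counting argument as before $\Prob_{M,\scheduler_w}(\LTLglobally\LTLfinally f) = \Prob_{\mathcal B}(w) > 0$. For the $(\Rightarrow)$ direction, from any witnessing assignment $\Sigma = ((\sigma_1,\scheduler_1),(\sigma_2,\scheduler_2))$ the universal conjunct guarantees that all paths of $M_{\scheduler_1}$ share a single label sequence $w \in \Lambda^\omega$ (instantiate $\pi_1$ twice and $\pi_2$ once, using symmetry and transitivity of label-equality), and the MDP construction again gives $\Prob_{M,\scheduler_1}(\LTLglobally\LTLfinally f) = \Prob_{\mathcal B}(w)$, whence positivity yields $\Prob_{\mathcal B}(w) > 0$.

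The main point to be careful about — and the only place where this proof genuinely diverges from Theorem~\ref{thm:undecidability-pba} — is verifying that the \phl semantics of $\forall\pi_1{:}\sigma_1 \forall\pi_2{:}\sigma_2$ (as defined in Section~\ref{sec:semantics}, where each path quantifier's Markov chain is determined by whether a previously-added path variable for the same scheduler already sits in the path assignment) indeed yields the expected independent product interpretation over $M_{\scheduler_1} \times M_{\scheduler_2}$. Since no prior path variable for either $\sigma_1$ or $\sigma_2$ is present when those quantifiers are processed, the two quantifications range over $\Pathsi(M_{\scheduler_1})$ and $\Pathsi(M_{\scheduler_2})$ from the respective initial distributions, exactly as required. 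Once this semantic check is in hand, the remainder of the argument is a direct port of the proof of Theorem~\ref{thm:undecidability-pba}.
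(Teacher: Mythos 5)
Your proof is correct, but it takes a genuinely different route from the paper's. The paper does not reuse the MDP from Theorem~\ref{thm:undecidability-pba}; instead it builds a larger MDP with two disjoint components reachable from a fresh initial state via actions $\widehat\alpha_1,\widehat\alpha_2$ --- a purely deterministic ``word-writing'' component (tagged $a'$) in which $\scheduler_1$ spells out $w$ with no probabilistic branching at all, and a PBA-mimicking component (tagged $a''$) for $\scheduler_2$ --- and the $\forall\pi_1{:}\sigma_1\forall\pi_2{:}\sigma_2$ constraint forces the two schedulers into their respective components and synchronizes their letters. Your version keeps the original MDP and instead extracts blindness of \emph{both} induced chains from the bipartite comparison via transitivity (every path of $M_{\scheduler_1}$ agrees with every path of $M_{\scheduler_2}$, hence all paths of each chain agree with one another, provided both path sets are nonempty). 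This is more economical and makes clear that the $\exists\pi\forall\pi'$ prefix of Theorem~\ref{thm:undecidability-pba} was inessential; you were also right to flag, and correctly resolve, the one semantic point on which the argument hinges: because $\pi_1$ and $\pi_2$ are bound to \emph{distinct} scheduler variables, both quantifiers range over paths from the respective initial distributions (a single-scheduler $\forall\pi{:}\sigma\forall\pi'{:}\sigma$ would relativize the second quantifier to $\pi$'s initial state and would not force agreement across different initial states of $\mu$). What the paper's heavier construction buys is that the independence of the word $w$ from the probabilistic execution is syntactically manifest (the word is written in a component with no probabilistic branching), whereas in your argument it is a consequence of the enforced blindness; both are sound.
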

\begin{proof}
The proof follows the lines of the proof of Theorem~\ref{thm:undecidability-pba} and is again by reduction from the emptiness problem for PBA. However, the \phl formula that we will now use will assert the existence of two schedulers, one that builds up a word $w$ and another one that mimics the behaviour of the PBA on $w$. Since the schedulers are independent, this would guarantee that the 
word $w$ is chosen independently of the execution of the PBA. In order to allow for the two types of schedulers, the state space of the MDP that we will define will consists of two types of states. In states of the first type the scheduler will build up the word, and in states of the second type the scheduler will mimic the PBA.

Let $\mathcal B = (Q,\Lambda,\delta,\mu,F)$ be a PBA.

Let the MDP $M = (S,\Act,\trans,\init,\AP,L)$ have the following components.
\begin{itemize}
\item $S =  \{s_0,\widehat s_1\} \uplus \{s_\alpha \mid \alpha \in \Lambda\} \uplus (Q \times (\Lambda \uplus \{\widehat\alpha_2\}))$.
Intuitively, $s_0$ is the initial state, $\{\widehat s_1\} \uplus \{s_\alpha \mid \alpha \in \Lambda\}$ is the part of the states where the scheduler chooses a word $w$, and $Q \times (\Lambda \uplus \{\widehat\alpha_2\})$ is the part of the state space where the PBA $B$ is mimicked. 

\item $\Act = \Lambda \uplus \{\widehat\alpha_1,\widehat\alpha_2\}$, where the actions $\widehat\alpha_1$ and $\widehat\alpha_2$ serve for choosing which part of the state space will be entered.
\item $\trans : S \times \Act \times S \to [0,1]$ is defined as follows.

\begin{align*}
\trans(s_0,\alpha,s) & = \begin{cases}
1 & \text{if } \alpha=\widehat\alpha_1 \text{ and } s = \widehat s_1,\\
\mu(q) & \text{if } \alpha=\widehat\alpha_2 \text{ and } s  = (q,\widehat\alpha_2),\; q \in Q,\\
0 & \text{otherwise};
\end{cases}\\
\trans(\widehat s_1,\alpha,s') & = \begin{cases}
1 & \text{if } \alpha \in \Lambda, s' = s_\alpha,\\
0 & \text{otherwise};
\end{cases}\\
\trans(s_\alpha,\alpha',s') & = \begin{cases}
1 & \text{if } \alpha' \in \Lambda, s' = s_{\alpha'},\\
0 & \text{otherwise};
\end{cases}\\
\trans((q,\alpha),\alpha',s') & = \begin{cases}
\delta(q,\alpha',q') & \text{if } \alpha' \in \Lambda \text{ and } s' = (q',\alpha'),\; q' \in Q\\
0 & \text{otherwise};
\end{cases}
\end{align*}
Intuitively, from state $s_0$ via one of the deterministic actions $\widehat\alpha_1$ and $\widehat\alpha_2$ the corresponding part of the state space is entered. 

In states $\widehat s_1$ and $s_\alpha$ for $\alpha \in \Lambda$ the scheduler chooses the next letter and transitions via a deterministic transition to the state for that letter.

In states of the form $(q,\alpha)$ the  transitions mimic the behaviour of $\mathcal B$.
\item $\init(s_0) = 1$ and $\init(s) = 0$ for $s \neq s_0$.
\item $\AP = \{a',a'',f\} \uplus  \Lambda$. The atomic propositions $a',a''$ are used to indicate which part of the state space the state belongs to, and the proposition $f$ is used to indicate the states accepting in $\mathcal{B}$.
\item $L : S \to \AP$ is defined as follows.
\[L(s) = \begin{cases}
\emptyset & \text{if } s = s_0,\\
\{a'\} & \text{if } s = \widehat s_1,\\
\{a''\} & \text{if } s = (q,\alpha_2),\; q \in Q\setminus F,\\
\{a'',f\} & \text{if } s = (q,\alpha_2),\; q \in F,\\
\{\alpha\} & \text{if } s = s_\alpha,\\
\{\alpha\} & \text{if } s = (q,\alpha),\; q \in Q\setminus F,\; \alpha \in \Lambda, \\
\{\alpha,f\} & \text{if } s = (q,\alpha),\; q \in F,\; \alpha \in \Lambda. \\

\end{cases}
\]
\end{itemize}

We define the \phl formula
\[\begin{array}{ll}
\qform = \exists \sigma_1.\;\exists \sigma_2.\; \big(&
\forall \pi_1:{\sigma_1}.\forall \pi_2:{\sigma_2}.
(
\LTLnext a'_{\pi_{1}} \wedge 
\LTLnext a''_{\pi_{2}} \wedge 
\LTLglobally \bigwedge_{\alpha \in \Lambda}
(\alpha_{\pi_{1}}\leftrightarrow \alpha_{\pi_{2}})\big) \wedge\\ & 
\prob(\LTLglobally\LTLfinally f_{\sigma_2})>0
.
\end{array}\]

The proof that for the MDP $M$ and the formula $\Phi$ defined above it holds that there exists a word $w$ such that $\Prob_{\mathcal{B}}(w) > 0$ if and only if $M \models \Phi$ is similar to that in Theorem~\ref{thm:undecidability-pba}.

($\Rightarrow$) Let $w=\alpha_0\alpha_1 \ldots \in \Lambda^\omega$ be such that $\Prob_{\mathcal{B}}(w) > 0$.

We define the deterministic scheduler $\scheduler_1: (S \times \Act)^*S \to \Act$ such that at state $s_0$ it chooses $\widehat\alpha_1$ and for any longer sequence $s_1a_1\ldots s_n a_n s'\in (S \times \Act)^*S$ we have 
$\scheduler_1(s_1a_1\ldots s_n a_n s') = \alpha_{n-1}$. That is, the scheduler $\scheduler_1$ enters the first part of the  state space and then blindly follows the word  $w$. Thus, it is clear that all paths in $M_{\scheduler_1}$ have the same sequence of labels (corresponding to  $w$).

We define the second deterministic scheduler $\scheduler_2: (S \times \Act)^*S \to \Act$ such that at state $s_0$ it chooses $\widehat\alpha_2$ and for any longer sequence
 $s_1a_1\ldots s_n a_n s'\in (S \times \Act)^*S$ we have 
$\scheduler_2(s_1a_1\ldots s_n a_n s') = \alpha_{n-1}$. That is, the scheduler $\scheduler_2$ enters the other part of the state space and then also blindly follows the word  $w$
 regardless of the outcome of the probabilistic choices. Thus, all paths in $M_{\scheduler_2}$ also have the same sequence of labels (corresponding to the word $w$).

Therefore, by construction of $\scheduler_1$ and $\scheduler_2$, the first conjunct is satisfied.
 
Since $\trans$ is defined following $\delta$ and $f \in L((q,\alpha))$ if and only if $q \in F$, we have that $\Prob_{M,\scheduler_2}(\LTLglobally\LTLfinally f_{\sigma_2}) = \Prob_{\mathcal B}(w)$, and hence, since $\Prob_{\mathcal B}(w) > 0$ the second conjunct of $\qform$ will be satisfied.

This concludes the proof that $M \models \Phi$.

($\Leftarrow$) Suppose that $M \models \Phi$. Thus, there exists a scheduler assignment $\sassign = ((\sigma_1,\scheduler_1),(\sigma_2,\scheduler_2))$ such that 
$M,\sassign \models \big(
\big(\forall \pi_1:{\sigma_1}.\forall \pi_2:{\sigma_2}.
(
\LTLnext a'_{\pi_{1}} \wedge 
\LTLnext a''_{\pi_{2}} \wedge 
\LTLglobally \bigwedge_{\alpha \in \Lambda}
(\alpha_{\pi_{1}}\leftrightarrow \alpha_{\pi_{2}})\big) \wedge
\prob(\LTLglobally\LTLfinally f_{\sigma_2})>0$.
By the construction of the MDP $M$, and since $M_\sassign$ satisfies the first conjunct of $\qform$, there must exist at least one infinite path in the Markov chain $M_{\scheduler_1}$ visiting the state $\widehat s_1$. 
Let $w\in \Lambda^\omega$ be the word obtained from the sequence of labels on this path. Since $M_\sassign$ satisfies the first conjunct, we have that all paths in $M_{\scheduler_2}$ are labeled with $w$. Therefore, $\Prob_{M,\scheduler_2}(\LTLglobally\LTLfinally f_{\sigma_2}) = \Prob_{\mathcal B}(w)$. Since $M_\sassign$ satisfies the second conjunct, we have that $\Prob_{M,\sassign}(\LTLglobally\LTLfinally f_{\sigma_2}) >0$, which thus implies that $\Prob_{\mathcal B}(w)>0$.

This concludes the proof.

\qed
\end{proof}

\begin{corollary}\label{thm:undecidability-universal}
The model checking problem for \phl formulas of the form 
\[\forall \sigma_1\ldots\forall\sigma_n.\ 
\big((\forall \pi_1:{\sigma_1}\ldots\forall\pi_n:{\sigma_n}.
\;\psi) \rightarrow \pexpr \bowtie c\big)\] is undecidable.
\end{corollary}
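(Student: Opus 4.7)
The plan is to reduce from Theorem~\ref{thm:undecidability-existential} by logical negation. Concretely, suppose for contradiction that the model checking problem for \phl formulas of the form
\[
\forall \sigma_1\ldots\forall\sigma_n.\ \big((\forall \pi_1:{\sigma_1}\ldots\forall\pi_n:{\sigma_n}.\;\psi) \rightarrow \pexpr \bowtie c\big)
\]
is decidable. I would show that this decision procedure can be used to decide the model checking problem for \phl formulas of the form treated in Theorem~\ref{thm:undecidability-existential}, yielding a contradiction.

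Given an instance $(M, \Phi_\exists)$ with $\Phi_\exists = \exists \sigma_1\ldots\exists\sigma_n.\ \big((\forall \pi_1:{\sigma_1}\ldots\forall\pi_n:{\sigma_n}.\;\psi) \wedge \pexpr \bowtie c\big)$, I would first observe that $M \models \Phi_\exists$ if and only if $M \not\models \neg \Phi_\exists$. Applying de Morgan duality at the top level, the formula $\neg \Phi_\exists$ is semantically equivalent to
\[
\Phi_\forall \;=\; \forall \sigma_1\ldots\forall\sigma_n.\ \big((\forall \pi_1:{\sigma_1}\ldots\forall\pi_n:{\sigma_n}.\;\psi) \rightarrow \neg(\pexpr \bowtie c)\big),
\]
using that $\neg(A \wedge B)$ is equivalent to $A \rightarrow \neg B$ and that negation commutes with the scheduler quantifier.

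Next I would argue that $\neg(\pexpr \bowtie c)$ is again a probabilistic predicate of the form allowed by the syntax of \phl. Since $\bowtie \in \{\leq, <, \geq, >\}$, its complementary relation $\overline{\bowtie}$ (given by $\overline{\leq} = >$, $\overline{<} = \geq$, $\overline{\geq} = <$, $\overline{>} = \leq$) is also in $\{\leq, <, \geq, >\}$, so $\neg(\pexpr \bowtie c) \equiv \pexpr \,\overline{\bowtie}\, c$ is a well-formed probabilistic predicate. Hence $\Phi_\forall$ belongs syntactically to the fragment covered by the assumed decision procedure. Running that procedure on $(M, \Phi_\forall)$ and flipping its answer decides $M \models \Phi_\exists$, contradicting Theorem~\ref{thm:undecidability-existential}.

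The only subtle step is verifying that the syntactic transformation is indeed within the target fragment; the rest is a routine application of duality. In particular, one should note that $\psi$ itself is unchanged by the transformation, so the side condition that the inner path-quantified formula has the prescribed shape $\forall \pi_1:\sigma_1 \ldots \forall\pi_n:\sigma_n.\;\psi$ is preserved verbatim. Thus the reduction is sound and the decidability assumption leads to a contradiction, establishing the corollary.
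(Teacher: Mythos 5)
Your proposal is correct and follows essentially the same route as the paper: the paper's proof of this corollary is precisely the duality $M\models\forall\vec\sigma.(\chi\rightarrow\pexpr\bowtie c)$ iff $M\not\models\exists\vec\sigma.(\chi\wedge\pexpr\,\overline\bowtie\,c)$ with the same complementary-relation table, invoked against Theorem~\ref{thm:undecidability-existential}. Your extra check that $\pexpr\,\overline{\bowtie}\,c$ stays inside the syntactic fragment and that $\psi$ is untouched is a sensible (and correct) addition that the paper leaves implicit.
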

\begin{proof}
The result follows from the fact that for any MDP $M$ it holds that
\[
\begin{array}{llll}
M & \models & \forall \sigma_1\ldots\forall\sigma_n.\ 
\big((\forall \pi_1:{\sigma_1}\ldots\forall\pi_n:{\sigma_n}.
\;\psi) \rightarrow \pexpr \bowtie c\big) & \text{ iff}\\
M & \not\models & \neg\forall \sigma_1\ldots\forall\sigma_n.\ 
\big((\forall \pi_1:{\sigma_1}\ldots\forall\pi_n:{\sigma_n}.
\;\psi) \rightarrow \pexpr \bowtie c\big) & \text{ iff}\\
M & \not\models & \exists \sigma_1\ldots\exists\sigma_n.\ 
\big((\forall \pi_1:{\sigma_1}\ldots\forall\pi_n:{\sigma_n}.\;\psi) \wedge \pexpr \overline\bowtie c\big),
\end{array}
\]
where 
$\overline\bowtie = 
\begin{cases}
\leq & \text{if } \bowtie  = >,\\
< & \text{if } \bowtie = \geq,\\
\geq & \text{if } \bowtie = <,\\
> & \text{if } \bowtie = \leq.\\
\end{cases}
$

\qed
\end{proof}

\section{Correctness of Approximate Model Checking}\label{sec:approx-appx}

Here we provide arguments for the correctness of the procedure described in Section~\ref{sec:algorithm-overapprox}. We begin with establishing some useful properties of the automata and MDPs constructed by the procedure.

For a path $\widetilde \pi  \in \Paths(\widetilde M)$ we denote with $\widetilde{\pi}|_{\widehat{S}_\hform}$ the projection of $\widetilde \pi$ on its first component. For each $i=1,\ldots,k$ we denote by $\widetilde{\pi}|_i$ the projection of $\widetilde \pi$ on the component corresponding to $Q_i$. By the construction of $\widetilde M$ we have that $\widetilde{\pi}|_{\widehat{S}_\hform} \in \Paths(\widehat{M}_\hform)$, and for each $i=1,\ldots,k$ the sequence $\widetilde{\pi}|_i$ is a run of $\mathcal A_i$.   
Since the automata $\mathcal A_i$ are deterministic, for every path $\widehat{\pi} \in \Paths(\widehat{M}_\hform)$ there exists a unique path $\widetilde \pi \in \Paths(\widetilde M)$ such that $\widetilde{\pi}|_{\widehat{S}_\hform} = \widehat \pi$.%, which we denote by $\widehat \pi_+$.

The next proposition establishes that the paths in the MDP $\widehat M_\hform$ are precisely the paths of the $n$-self-composition $M^n$ that satisfy the $n$-safety property $\psi$. 
\begin{proposition}\label{prop:safety-MDP}
The MDP $M^n =  (S^n,\Act^n,\widehat\trans,\widehat\init,\AP,\widehat L)$, 
the deterministic safety automaton  $\mathcal D_\psi = (Q,S^n,\delta,q_0)$ and their product 
$\widehat M_\hform =  (\widehat S_\hform,\Act^n,\widehat\trans_\hform,\widehat\init_\hform,\AP,\widehat L_\hform)$ 
defined above have the following properties. 

We have that
$\widehat \pi =  (s_{0,1},\ldots,s_{0,n},q_0)(s_{1,1},\ldots,s_{1,n},q_1)\ldots\in  \Paths(\widehat M_\hform)$ if and only if 
$\pi^n = (s_{0,1},\ldots,s_{0,n})(s_{1,1},\ldots,s_{1,n})\ldots\in \Paths(M^n)$ and 
$q_0q_1\ldots$ is a run of $\mathcal D_\psi$ on $\pi^n$. 
Furthermore, for every scheduler assignment $\sassign$ and 
path assignment  $\passign_w = ((\pi_{\sigma_1},s_{0,1}s_{1,1}\ldots),\ldots,(\pi_{\sigma_n},s_{0,n}s_{1,n}\ldots))$ such that 
$M,\sassign,\passign_w \models \psi$ and $\pi^n = (s_{0,1},\ldots,s_{0,n})(s_{1,1},\ldots,s_{1,n})\ldots \in \Paths(M^n)$ there exists a unique run $q_0q_1\ldots$ of $\mathcal D_\psi$ on $\pi^n$, and $(s_{0,1},\ldots,s_{0,n},q_0)(s_{1,1},\ldots,s_{1,n},q_1)\ldots \in \Paths(\widehat M_\hform)$.
\end{proposition}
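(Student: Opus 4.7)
The plan is to establish both claims by unpacking the definitions of the product construction, with determinism of $\mathcal{D}_\psi$ doing the main work in the second part. First I would prove the biconditional in the first claim by a straightforward induction on the prefixes of the path. For the forward direction, assume $\widehat\pi = (s_{0,1},\ldots,s_{0,n},q_0)(s_{1,1},\ldots,s_{1,n},q_1)\ldots \in \Paths(\widehat M_\hform)$. By the definition of $\widehat\trans_\hform$, each transition requires both $\widehat\trans((s_{i,1},\ldots,s_{i,n}),a,(s_{i+1,1},\ldots,s_{i+1,n})) > 0$ for some action $a$ and $q_{i+1} = \delta(q_i,(s_{i+1,1},\ldots,s_{i+1,n}))$. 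The former immediately gives $\pi^n \in \Paths(M^n)$, and the latter (together with the initial condition $q_0 = \delta(q_0',(s_{0,1},\ldots,s_{0,n}))$ encoded in $\widehat\init_\hform$, where $q_0'$ is the initial state of $\mathcal D_\psi$) tells us that $q_0q_1\ldots$ is a run of $\mathcal D_\psi$ on $\pi^n$. The reverse direction is completely symmetric: given a path $\pi^n$ in $M^n$ together with a run of $\mathcal D_\psi$ on it, the tuples $((s_{i,1},\ldots,s_{i,n}),q_i)$ satisfy both conditions in the definition of $\widehat\trans_\hform$, and the initial distribution matches.

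Next I would establish the second claim. Given $\sassign$ and $\passign_w$ with $M,\sassign,\passign_w \models \psi$, the word $w = (s_{0,1},\ldots,s_{0,n})(s_{1,1},\ldots,s_{1,n})\ldots$ lies in $\mathcal L(\mathcal D_\psi)$ by the defining property of $\mathcal D_\psi$ stated at the start of step~1 of the procedure (namely that $\mathcal D_\psi$ accepts exactly those words arising from path assignments satisfying $\psi$, which in turn uses the fact that $\psi$ has no path quantifiers so $\sassign$ is irrelevant for its satisfaction). Because $\mathcal D_\psi$ is a deterministic safety automaton, acceptance of $w$ means that the unique run of $\mathcal D_\psi$ on $w$ is infinite (i.e., the partial transition function never becomes undefined along $w$), and determinism gives its uniqueness. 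Combining this unique infinite run $q_0 q_1 \ldots$ with the assumed path $\pi^n \in \Paths(M^n)$, the first claim then yields that $(s_{0,1},\ldots,s_{0,n},q_0)(s_{1,1},\ldots,s_{1,n},q_1)\ldots$ is in $\Paths(\widehat M_\hform)$.

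The main obstacle, if any, is not really a deep issue but a bookkeeping one: keeping straight that the alphabet of $\mathcal D_\psi$ is $S^n$ (states of the self-composition) while the semantics of $\psi$ is phrased over $2^{\AP}$-labels of paths. This is resolved by observing that $\widehat L$ is determined pointwise by $L$, so the word over $S^n$ read by $\mathcal D_\psi$ is equivalent, from the point of view of satisfying $\psi$, to the sequence of labels along the path assignment $\passign_w$. Once this correspondence is made explicit (and the boundary case of the initial distribution handled uniformly with the transition case), both parts of the proposition reduce to immediate applications of the product construction and determinism.
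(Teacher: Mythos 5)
Your proof is correct, and it supplies exactly the routine definition-unpacking that the paper leaves implicit: the paper states Proposition~\ref{prop:safety-MDP} without proof, presenting it as a direct consequence of the product construction and of the defining property of $\mathcal D_\psi$ from step~1 of the procedure. Your handling of the two genuinely delicate points --- the off-by-one between the automaton components stored in the product states and the usual notion of a run (resolved via the condition on $\widehat\init_\hform$), and the fact that for a deterministic safety automaton acceptance of $w$ is precisely the statement that the unique run on $w$ is everywhere defined --- matches what the authors evidently intend.
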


Next we state the properties of the success sets $U_I$, which are a direct consequence from the construction of $U_I$ and standard results presented in~\cite{PrinciplesOfModelChecking}.
 
\begin{proposition}\label{prop:success-set}
In the MDP $\widetilde M =  \widehat{M}_\hform \otimes \mathcal A_1 \otimes\ldots \otimes \mathcal A_k$, each of the success sets $U_I$ for $I \subseteq \{1,\ldots,k\}$ computed above has the following properties.
\begin{itemize}
\item[(i)] There exists a finite-memory scheduler $\widetilde \scheduler \in \Sched(\widetilde M)$ s.t. for all $\widetilde s \in U_I$: \[\Prob_{\widetilde M,\widetilde\scheduler,\widetilde s}(\LTLglobally U_I \;\wedge\;\bigwedge_{\widetilde t \in U_I}\LTLglobally\LTLfinally \widetilde t) = 1.\]
\item[(ii)] For every scheduler $\widetilde \scheduler \in \Sched(\widetilde M)$ it holds that \[\Prob_{\widetilde M,\widetilde\scheduler}\{\widetilde \pi \models \LTLglobally\LTLfinally U_I \mid \forall i \in I:\ \widetilde \pi \models \bigvee_{1\leq j \leq m_i} (\LTLfinally\LTLglobally \neg \widetilde B_{i,j} \wedge \LTLglobally\LTLfinally \widetilde  G_{i,j})\} = 1.\]
\end{itemize}
\end{proposition}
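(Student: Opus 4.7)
The plan is to prove both items by appealing to the theory of end components and Rabin objectives in MDPs, combined with the explicit construction of $U_I$ outlined in the procedure. Recall that the disjunctive rewriting
\[
\bigvee_{(j_i)_{i \in I}} \Big( \LTLfinally\LTLglobally \bigwedge_{i \in I} \neg \widetilde B_{i,j_i} \;\wedge\; \bigwedge_{i \in I} \LTLglobally\LTLfinally \widetilde G_{i,j_i} \Big)
\]
reduces the combined Rabin condition~(\ref{eq:combined-property}) to a disjunction of conjunctive Rabin-like conditions. The standard construction in~\cite{PrinciplesOfModelChecking} then computes $U_I$ as the set of states of $\widetilde M$ from which some maximal end component $E$ that is \emph{accepting} for at least one tuple $(j_i)_{i \in I}$ is reachable with probability one under some scheduler. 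Here an end component $E$ is accepting for $(j_i)_{i \in I}$ if $E \cap \widetilde B_{i,j_i} = \emptyset$ and $E \cap \widetilde G_{i,j_i} \neq \emptyset$ for every $i \in I$.

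For property (i), I would construct a finite-memory scheduler in two phases. First, for every $\widetilde s \in U_I$, a memoryless reach-scheduler drives the path with probability one into the union $\mathcal E$ of all accepting MECs; such a scheduler exists by the very definition of $U_I$ together with the standard almost-sure reachability algorithm for MDPs. Second, once a path enters an accepting MEC $E$, the scheduler switches to a memoryless ``fair'' scheduler for $E$ that visits every state of $E$ infinitely often with probability one. The existence of such a scheduler is a well-known consequence of the strong connectivity of the graph underlying an end component, combined with a Borel--Cantelli argument. Concatenating the two phases yields a finite-memory scheduler; since every state of every accepting MEC belongs to $U_I$ and $U_I$ is absorbing under the composite scheduler, the conjunction $\LTLglobally U_I \wedge \bigwedge_{\widetilde t \in U_I} \LTLglobally\LTLfinally \widetilde t$ holds almost surely from every $\widetilde s \in U_I$.

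For property (ii), I would argue path by path. Under any scheduler $\widetilde\scheduler$, a classical result from~\cite{PrinciplesOfModelChecking} states that almost every infinite path in $\widetilde M_{\widetilde\scheduler}$ eventually remains inside some end component of $\widetilde M$ and visits every state of that component infinitely often. If such a path additionally satisfies the Rabin condition~(\ref{eq:combined-property}), then the end component it settles in must, for some witness tuple $(j_i)_{i \in I}$, avoid $\widetilde B_{i,j_i}$ and intersect $\widetilde G_{i,j_i}$ for each $i \in I$; that is, the component is accepting, and is therefore contained in some accepting MEC. Since accepting MECs are, by construction, included in $U_I$, the path visits $U_I$ infinitely often, which is precisely the claimed conditional almost-sure statement.

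The main obstacle will be the careful interplay in (ii) between ``settling in an end component'' under an arbitrary, possibly randomized and infinite-memory, scheduler and the Rabin acceptance condition. This rests on the classical limit-set theorem for MDPs, which I would invoke from~\cite{PrinciplesOfModelChecking} rather than reprove; once it is in hand, the remaining argument is a straightforward case distinction on the witness tuple $(j_i)_{i \in I}$, and the construction in (i) reduces to stitching together two standard memoryless schedulers with a one-bit mode flag tracking whether an accepting MEC has been reached.
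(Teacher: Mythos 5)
Your overall route is the one the paper itself relies on: the paper gives no self-contained proof of this proposition, only the remark that both items follow from the construction of $U_I$ and the standard end-component results in~\cite{PrinciplesOfModelChecking}, which is exactly the machinery you invoke. Your argument for item (ii) is sound: by the limit-set theorem, under any scheduler almost every path eventually remains inside an end component and visits each of its states infinitely often; a path additionally satisfying the combined Rabin condition must therefore settle in an end component that is accepting for some tuple $(j_i)_{i\in I}$, hence contained in $U_I$, so it satisfies $\LTLglobally\LTLfinally U_I$.

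There is, however, a genuine gap in your item (i), caused by the definition of $U_I$ you adopt and by the final inference you draw. You take $U_I$ to be the set of states from which some accepting MEC is reachable with probability one; this set in general contains transient states lying in no accepting end component, as well as several mutually unreachable accepting MECs. Under your two-phase scheduler a path settles in a \emph{single} accepting MEC $E$ and from then on visits only states of $E$, so every $\widetilde t \in U_I \setminus E$ is visited at most finitely often and $\Prob_{\widetilde M,\widetilde\scheduler,\widetilde s}\bigl(\bigwedge_{\widetilde t \in U_I}\LTLglobally\LTLfinally \widetilde t\bigr)$ is $0$ rather than $1$ whenever $U_I$ is not itself a single end component. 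Your closing sentence for (i) --- that the conjunction holds ``since every state of every accepting MEC belongs to $U_I$ and $U_I$ is absorbing'' --- is a non sequitur: that gives $\LTLglobally U_I$ but says nothing about returning to \emph{all} of $U_I$ infinitely often. To recover the intended statement you should take $U_I$ to be the union of the accepting end components themselves (not their almost-sure attractor) and prove the recurrence clause per component: for each accepting end component $E$ there is a finite-memory scheduler under which, from any state of $E$, the path stays in $E$ and visits every state of $E$ infinitely often almost surely; combining these over the finitely many accepting MECs yields the scheduler of (i), with the conjunction $\bigwedge_{\widetilde t}\LTLglobally\LTLfinally \widetilde t$ read over the component containing $\widetilde s$ (or, equivalently, with the conclusion weakened to $\LTLglobally U_I$ together with the Rabin acceptance). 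As literally written the universal conjunction over all of $U_I$ is not achievable by your construction, and your proof needs to either repair the construction or flag this discrepancy with the proposition as stated.
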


Finally, we show the relationship between the optimal solution to the linear program~(\ref{eq:optimization-problem}) and the value defined in~(\ref{eq:scheduler-overapprox-max}). The construction in the proof of the below proposition is similar to the constructions presented in~\cite{MDPsAndRegularEvents}.
\begin{proposition}\label{prop:optimization}
Let $\widetilde M =  \widehat{M}_\hform \otimes \mathcal A_1 \otimes\ldots \otimes \mathcal A_k$ be the MDP constructed above. 
Then, there exists a solution $(x_{\widetilde s}^*)_{s\in \widetilde S}$ to the linear program~(\ref{eq:optimization-problem}),
and it holds that 
\[\sum_{\widetilde s \in\widetilde S} \widetilde\init(\widetilde s) \cdot x_{\widetilde s}^* = 
\max_{\widetilde\scheduler \in \Sched(\widetilde{M})}
\sum_{i=1}^k\big(c_i \cdot \Prob_{\widetilde M,\widetilde\scheduler}(\bigvee_{1\leq j \leq m_i} (\LTLfinally\LTLglobally \neg \widetilde  B_{i,j} \wedge \LTLglobally\LTLfinally \widetilde G_{i,j})\big).\]

\end{proposition}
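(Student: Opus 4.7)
The plan is to prove the equality in two directions after first establishing that the LP admits an optimum. For existence, the constant assignment $x_{\widetilde s} = \sum_{i=1}^k c_i$ satisfies every constraint: the lower bounds $\sum_{i\in I} c_i\leq\sum_i c_i$ are respected since $c_i\geq 0$, and the Bellman-type inequality holds because $\sum_{\widetilde t}\widetilde\trans(\widetilde s,a,\widetilde t)\in\{0,1\}$. The objective is bounded below by $0$ on the feasible region, so by standard LP theory an optimal solution $(x^*_{\widetilde s})$ exists.

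For the inequality $\sum_{\widetilde s}\widetilde\init(\widetilde s)x^*_{\widetilde s}\geq\max_{\widetilde\scheduler}\sum_i c_i\Prob_{\widetilde M,\widetilde\scheduler}(\varphi_i)$, where $\varphi_i$ abbreviates the Rabin condition $\bigvee_{1\le j\le m_i}(\LTLfinally\LTLglobally\neg\widetilde B_{i,j}\wedge\LTLglobally\LTLfinally\widetilde G_{i,j})$, I would argue via a supermartingale argument. Fix any $\widetilde\scheduler$ and consider the random trajectory $\widetilde s_0\widetilde s_1\ldots$ it induces in $\widetilde M$. The Bellman-type LP constraints imply $\mathbb E[x^*_{\widetilde s_{n+1}}\mid\widetilde s_0,\ldots,\widetilde s_n]\leq x^*_{\widetilde s_n}$, so $(x^*_{\widetilde s_n})_n$ is a bounded non-negative supermartingale and converges almost surely. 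Writing $I^*(\widetilde\pi)=\{i:\widetilde\pi\models\varphi_i\}$, Proposition~\ref{prop:success-set}(ii) gives $\widetilde\pi\models\LTLglobally\LTLfinally U_{I^*(\widetilde\pi)}$ almost surely, so the LP constraint $x^*_{\widetilde s}\geq\sum_{i\in I^*(\widetilde\pi)}c_i$ for $\widetilde s\in U_{I^*(\widetilde\pi)}$ is triggered infinitely often along the trajectory, yielding $\lim_n x^*_{\widetilde s_n}\geq\sum_i c_i\mathbf{1}[\widetilde\pi\models\varphi_i]$ a.s. Dominated convergence then gives $\sum_{\widetilde s}\widetilde\init(\widetilde s)x^*_{\widetilde s}=\mathbb E[x^*_{\widetilde s_0}]\geq\sum_i c_i\Prob_{\widetilde M,\widetilde\scheduler}(\varphi_i)$.

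For the matching lower bound, I would construct a witnessing scheduler $\widetilde\scheduler^*$. The LP~(\ref{eq:optimization-problem}) is precisely the standard reachability-reward LP for the MDP $\widetilde M$ augmented at each $\widetilde s \in \bigcup_I U_I$ with a fresh absorbing ``commit'' action whose one-step reward is $\max\{\sum_{i\in I}c_i : \widetilde s\in U_I\}$. Standard MDP theory (cf.\ \cite{PrinciplesOfModelChecking,MDPsAndRegularEvents}) guarantees an optimal memoryless deterministic scheduler for this auxiliary problem. Reading it back yields a two-phase scheduler: in the first phase, play the memoryless strategy from the auxiliary MDP until it would commit in some $\widetilde s\in U_I$; then switch to the finite-memory scheduler $\widetilde\scheduler_I$ from Proposition~\ref{prop:success-set}(i), which keeps the path in $U_I$ and visits every state of $U_I$ infinitely often almost surely, thereby satisfying $\bigwedge_{i\in I}\varphi_i$ with probability~$1$. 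By construction, $\sum_i c_i\Prob_{\widetilde M,\widetilde\scheduler^*}(\varphi_i)=\sum_{\widetilde s}\widetilde\init(\widetilde s)x^*_{\widetilde s}$.

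The main obstacle is the lower-bound direction: making the reduction to the standard reachability-reward LP precise. In particular, one must invoke complementary slackness to argue that at the LP optimum the binding constraint at each $\widetilde s\in \bigcup_I U_I$ is either of the form $x^*_{\widetilde s}=\max\{\sum_{i\in I}c_i:\widetilde s\in U_I\}$ or a Bellman-type constraint (the objective forces each $x^*_{\widetilde s}$ down to the largest of its lower bounds), and then verify via the strong Markov property applied at the (a.s.\ finite) first commit time that the two-phase splicing preserves the expected reward. The $\geq$ direction, although using martingale machinery, is essentially a standard verification-style computation once Proposition~\ref{prop:success-set}(ii) is in hand.
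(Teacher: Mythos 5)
Your proposal is correct and follows essentially the same route as the paper: the paper's own (sketched) proof constructs exactly the augmented total-reward MDP you describe --- absorbing ``commit'' states reached from each $U_I$ with one-step reward $\sum_{i\in I}c_i$ --- identifies the LP~(\ref{eq:optimization-problem}) with the standard total-expected-reward LP for that MDP, and defers the two inequalities to the analogous results in~\cite{MDPsAndRegularEvents} together with Proposition~\ref{prop:success-set}. You simply make those two deferred steps explicit (a supermartingale verification argument for the upper bound, and the two-phase commit-then-stay scheduler for the lower bound), which is a faithful unpacking of the cited machinery rather than a different proof.
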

\begin{proof}[Sketch]
To show the relationship between the value \[\max_{\widetilde\scheduler \in \Sched(\widetilde{M})}
\sum_{i=1}^k\big(c_i \cdot \Prob_{\widetilde M,\widetilde\scheduler}(\bigvee_{1\leq j \leq m_i} (\LTLfinally\LTLglobally \neg \widetilde   B_{i,j} \wedge \LTLglobally\LTLfinally \widetilde  G_{i,j})\big)\]
and the linear program~(\ref{eq:optimization-problem}) we define an MDP $\widetilde M'$ obtained from $\widetilde M$ by a construction analogous to the one presented in~\cite{MDPsAndRegularEvents}.
We construct the new MDP $\widetilde M' = (\widetilde S',\widetilde\Act',\widetilde \trans',\widetilde \init', Q_1 \times \ldots \times Q_k,\widetilde L')$ with the following components:
\begin{itemize}
\item $\widetilde S' = \widetilde S \uplus \{s^*_I \mid I \subseteq \{1,\ldots,k\}, I \neq \emptyset\}$;
\item $\widetilde \Act' = \Act^n \uplus \{a^*_I \mid I \subseteq \{1,\ldots,k\},I \neq \emptyset\}$;
\item $\widetilde \trans'(s,a,s') = \begin{cases} 
\widetilde\trans(s,a,s') & \text{if } s,s' \in \widetilde S \text{ and } a \in \Act^n,\\
1 &\text{if } s\in U_I, a = a^*_I \text{ and } s' = s^*_I \text{ for some } I \subseteq \{1,\ldots,k\},\\
1 &\text{if } s = s^*_I \text{ and } s' = s^*_I \text{ for some } I \subseteq \{1,\ldots,k\},\\
0 & \text{otherwise};
\end{cases}$
\item $\widetilde \init'(s) = \begin{cases} 
\widetilde\init(s) & \text{if } s \in \widetilde S,\\
0 & \text{otherwise};
\end{cases}$
\item $\widetilde L'(s) = \begin{cases} 
\widetilde L(s) & \text{if } s \in \widetilde S,\\
\emptyset & \text{otherwise}.
\end{cases}$
\end{itemize}
We also equip $\widetilde M'$ with a reward function $r: \widetilde S' \times \widetilde\Act' \times \widetilde S' \to \mathbb{R}_{\geq 0}$ where
\[
r(\widetilde s,a,\widetilde s') = \begin{cases} 
\sum_{i \in I} c_i & \text{if } s \in U_I, a =a^*_I \text{ and }s' = s^*_I,\\
0 & \text{otherwise}.
\end{cases}
\] 
In the MDP $\widetilde M'$ we have added an absorbing state $s^*_I$ for every $\emptyset \neq I \subseteq \{1,\ldots,k\}$ and a transition from every state in $U_I$ to $s^*_I$. Intuitively, reaching $s^*_I$ in $\widetilde M'$ corresponds to reaching and staying forever in some of the end components that are subsets of $U_I$. The reward function associates reward $r_I = \sum_{i \in I} c_i$ with each of the transitions entering $s^*_I$ from $U_I$, and all other transitions have reward $0$. Intuitively, a path in $\widetilde M'$ receives reward $r_I$ upon choosing to remain forever in the set $U_I$. We consider the maximal total expected reward in $(\widetilde M',r)$ and claim that it is equal to the quantity of interest in $\widetilde M$.
Formally, we claim that
\[
\begin{array}{l}
\max_{\widetilde\scheduler \in \Sched(\widetilde{M})}
\sum_{i=1}^k\big(c_i \cdot \Prob_{\widetilde M,\widetilde\scheduler}
(\bigvee_{1\leq j \leq m_i} (\LTLfinally\LTLglobally \neg \widetilde B_{i,j} \wedge \LTLglobally\LTLfinally \widetilde  G_{i,j})\big) 
= \\
\max_{\widetilde\scheduler \in \Sched(\widetilde{M}')}
\mathbb{E}_{\widetilde M',\widetilde\scheduler}[\sum_{j=0}^\infty r(s_j,a_j,s_{j+1})]
.
\end{array}
\]
By the definition of the linear program~(\ref{eq:optimization-problem}), the quantity on the right-hand side of the above equality is in fact the optimal solution to the linear program (\ref{eq:optimization-problem}), which follows from standard results in stochastic dynamic programming. Thus the claim of the theorem follows from the above equality, which, in turn can be established in an analogous way to the corresponding results in~\cite{MDPsAndRegularEvents} using the properties of the success sets $U_I$.
\qed
\end{proof}

Finally, combining the results above we establish the correctness of the model checking procedure for establishing $M \models \qform$ for formulas of the form~(\ref{eq:universal-fragment}).

\setcounter{theorem}{2}
\begin{theorem}[Correctness]\label{thm-correctness-overapprox}
Let $\widetilde M =  \widehat{M}_\hform \otimes \mathcal A_1 \otimes\ldots \otimes \mathcal A_k$ be the MDP constructed above, and let $(x_{\widetilde s}^*)_{s\in \widetilde S}$ be the optimal solution to the linear program~(\ref{eq:optimization-problem}). If $\bowtie \in \{\leq, <\}$, then it holds that
\[\sum_{\widetilde s \in\widehat S} \widetilde\init(\widetilde s) \cdot x^*_{\widetilde s} \bowtie c \;\text{ implies }\ M\models\forall \sigma_1\ldots\forall\sigma_n.\ \big(\hform \rightarrow \sum_{i=1}^k c_i \cdot \prob(\pathform_i) \bowtie c\big).\]
\end{theorem}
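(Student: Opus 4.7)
The plan is to fix arbitrary schedulers $\scheduler_1,\ldots,\scheduler_n \in \Sched(M)$, form their product $\overline{\scheduler} = \scheduler_1 \parallel \ldots \parallel \scheduler_n \in \Sched(M^n)$, and show that if $M^n_{\overline\scheduler} \models \hform$ then $\sum_{i=1}^k c_i \cdot \Prob_{M^n,\overline{\scheduler}}(\pathform_i) \bowtie c$. When $\hform$ fails, the outer implication is vacuous and nothing needs to be proven. Since the $c_i$ are nonnegative and $\bowtie \in \{\leq, <\}$ (the normalization reduction right before the procedure handles the other cases), it suffices to bound the left-hand side from above by the LP optimum.

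The first main step is to \emph{lift} $\overline{\scheduler}$ to a scheduler $\widehat{\scheduler}$ on $\widehat{M}_\hform = M^n \otimes \mathcal{D}_\psi$. Because $\mathcal D_\psi$ is deterministic and its run on any word satisfying $\psi$ never gets stuck, and because $M^n_{\overline{\scheduler}} \models \hform$ forces every path of $M^n_{\overline{\scheduler}}$ to satisfy $\psi$ (Proposition~\ref{prop:safety-MDP}), the lifted scheduler $\widehat{\scheduler}$ induces a Markov chain whose paths project exactly to those of $M^n_{\overline{\scheduler}}$, with identical measure. Hence $\Prob_{M^n,\overline{\scheduler}}(\pathform_i) = \Prob_{\widehat{M}_\hform,\widehat{\scheduler}}(\pathform_i)$ for every~$i$.

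The second step is the \emph{monolithic overapproximation}: every product $\overline{\scheduler}$ lifts to a scheduler in $\Sched(\widehat{M}_\hform)$, but not vice versa, so
\[
\sum_{i=1}^k c_i \cdot \Prob_{\widehat{M}_\hform,\widehat{\scheduler}}(\pathform_i) \;\leq\; \max_{\widehat\scheduler'\in\Sched(\widehat M_\hform)} \sum_{i=1}^k c_i \cdot \Prob_{\widehat{M}_\hform,\widehat{\scheduler}'}(\pathform_i).
\]
Next I would pass to the Rabin product $\widetilde{M} = \widehat{M}_\hform \otimes \mathcal{A}_1 \otimes \ldots \otimes \mathcal{A}_k$. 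Since each $\mathcal{A}_i$ is deterministic, schedulers on $\widehat{M}_\hform$ correspond bijectively (with matching path measures) to schedulers on $\widetilde M$, and satisfaction of $\pathform_i$ on a path in $\widehat{M}_\hform$ coincides with satisfaction of the Rabin condition of $\mathcal{A}_i$ along the corresponding path in~$\widetilde{M}$. Consequently the above maximum equals $\max_{\widetilde\scheduler}\sum_{i=1}^k c_i\cdot \Prob_{\widetilde M,\widetilde\scheduler}\bigl(\bigvee_j(\LTLfinally\LTLglobally\neg\widetilde B_{i,j}\wedge\LTLglobally\LTLfinally\widetilde G_{i,j})\bigr)$, which by Proposition~\ref{prop:optimization} is precisely $\sum_{\widetilde s}\widetilde\init(\widetilde s)\,x^*_{\widetilde s}$. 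Chaining these (in)equalities with the hypothesis $\sum_{\widetilde s}\widetilde\init(\widetilde s)\,x^*_{\widetilde s}\bowtie c$ and using monotonicity of $\bowtie\in\{\leq,<\}$ gives the desired inequality.

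The main obstacle is the faithful lifting in the first step: one must argue that, under the assumption $M^n_{\overline\scheduler}\models\hform$, the (possibly only partially defined) product $M^n\otimes \mathcal D_\psi$ assigns positive probability only to transitions that stay within accepting states of $\mathcal D_\psi$, so that $\widehat{\scheduler}$ really is a well-defined scheduler on a valid MDP and the two Markov chains are measure-isomorphic. A secondary subtlety is the Rabin lifting, which requires noting that deterministic $\mathcal A_i$ add no nondeterminism to $\widetilde M$ beyond that already present in $\widehat M_\hform$, so the optimization over $\Sched(\widetilde M)$ and over $\Sched(\widehat M_\hform)$ give the same numerical value; this is where deterministic (as opposed to nondeterministic) Rabin automata for the $\pathform_i$ is essential.
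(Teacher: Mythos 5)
Your proposal is correct and follows essentially the same route as the paper: lift the product scheduler to $\widehat{M}_\hform$ via the safety-automaton product (Proposition~\ref{prop:safety-MDP}), overapproximate by maximizing over all schedulers of $\widehat{M}_\hform$ (which need not decompose into $n$ independent ones), and identify that maximum with the LP optimum through the deterministic Rabin product and Proposition~\ref{prop:optimization}. The paper itself only sketches this chaining (``combining the results above''), so your explicit treatment of the lifting and of the pruned-product subtlety is a faithful, slightly more detailed rendering of the same argument.
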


\end{document}